\newtheorem{theorem}{Theorem}[section]
\newtheorem{lemma}[theorem]{Lemma}
\newtheorem{corollary}[theorem]{Corollary}
\theoremstyle{definition}
\newtheorem{example}[theorem]{Example}
\theoremstyle{remark}
\newtheorem{remark}[theorem]{Remark}
\numberwithin{equation}{section}
\newcommand{\F}{\mathbb{F}}
\newcommand{\Mon}{\mathrm{Mon}}
\newcommand{\Aut}{\mathrm{Aut}}
\newcommand{\lcm}{\mathrm{lcm}}
\newcommand{\Fsig}{ F_{\sigma}(C)}
\newcommand{\Esig}{ E_{\sigma}(C)}
\begin{document}

\title{On the Structure of the Linear Codes with a Given Automorphism}


\author{Stefka Bouyuklieva}
\address{Faculty of Mathematics and Informatics, St. Cyril and St. Methodius University of Veliko Tarnovo, BULGARIA}
\email{stefka@ts.uni-vt.bg}
\thanks{The research is partially supported by the Bulgarian National Science Fund under Contract No KP-06-H62/2/13.12.2022}


\subjclass[2010]{Primary 94B05,20B25}

\keywords{linear codes, automorphisms, quasi-cyclic codes}

\date{}

\dedicatory{}

\begin{abstract}
The purpose of this paper is to present the structure of the linear codes over a finite field with $q$ elements that have a permutation automorphism of order $m$. These codes can be considered as generalized quasi-cyclic codes. Quasi-cyclic codes and almost quasi-cyclic codes are discussed in detail, presenting necessary and sufficient conditions for which linear codes with such an automorphism are self-orthogonal, self-dual, or linear complementary dual.
\end{abstract}

\maketitle

\section{Introduction}\label{sect:introduction}

Linear codes invariant under a given permutation have been studied for a long time, and this is most evident for cyclic codes (we refer to \cite{MacWilliams_Sloane_1977} and \cite{Handbook} for more information), and also for group codes (see \cite{Berman,group_codes}). The idea of using automorphisms in the construction of combinatorial structures is not new. In \cite{Anstee}, the authors used automorphisms in the search for a projective plane of order 10. The research of W. C. Huffman is devoted to the study of linear and more precisely self-dual codes over various finite fields and even rings having an automorphism of a given order (preferably prime) \cite{Huffman48,Huffman4I,Huffman3,Huffman2013}. We would like to mention also the works of V. Yorgov, G. Nebe,  M. Borello and W. Willems on the linear codes and their automorphisms (for example \cite{Borello_2p,Nebe,Yorgov56}).

Studying the structure of linear codes invariant under a given permutation is important not only from the point of view of obtaining useful information about the properties and parameters of these codes, but also for presenting efficient methods for constructing codes with given parameters and preset properties such as self-orthogonal or linear complementary dual (LCD) codes. Codes that have such an automorphism, have some symmetric structure and useful algebraic properties. As a disadvantage of these methods, we can point out the lack of comprehensiveness, i.e. we are not sure whether we have obtained all codes with the requested properties and parameters, we cannot prove the nonexistence of codes with given parameters unless we combine the research with other techniques. On the other hand, codes with a large group of automorphisms have a rich algebraic structure, very useful properties, practical applications, and therefore they are the most studied codes.

In this paper, we present a study on linear codes over a finite field with $q$ elements ($q$ is a prime power) having as an automorphism a permutation $\sigma$ of a given order $m$ (not necessarily prime), focusing in particular on the case where $\sigma$ has $c$ disjoint cycles of length $m$ and $f\ge 0$ fixed points. Our idea is to combine:
 \begin{itemize}
 \item[(1)] the works by Huffman (see for example \cite{Huffman48,Huffman3}) and Yorgov \cite{Yorgov56}, mostly on binary, ternary and Hermitian quaternary self-dual codes,
 \item[(2)] our own research on binary self-dual, self-orthogonal and LCD codes having an automorphism of prime order \cite{DCC97-order2,Stef2000,StefkaJavier,BMW2010}, and
 \item[(3)] the algebraic approach to quasi-cyclic codes of Patrick Sol\'{e} and San Ling \cite{LSole,LSoleIV}.
 \end{itemize}
 We present some general conditions according to which the considered linear codes are self-dual, self-orthogonal, or LCD, respectively.

The paper is organized as follows. In the next section we present the needed definitions and general statements. Section \ref{sect:perm-coprime} is devoted to permutations of order $m$, relatively prime to the characteristic of the considered finite field. The term \emph{almost quasi-cyclic code} is introduced for generalized quasi-cyclic (GQC) codes of block lengths $(m,\ldots,m,1,\ldots,1)$. These are linear codes invariant under a permutation of order $m$ with $c$ disjoint cycles of length $m$ and $f\ge 1$ cycles of length 1 (fixed points) in its decomposition. In Section \ref{sect:perm-notcoprime} we study codes with a permutation automorphism of order $m$ divisible by the characteristic of the field. We present some examples in Section \ref{sect:exam}. We end the paper with a conclusion.

\section{Preliminaries}
\label{Preliminaries}

A linear $[n,k]$ code $C$ is a $k$-dimensional subspace of the
vector space $\mathbb{F}_q^n$, where $\mathbb{F}_q$ is the finite
field of $q$ elements, $q=p^{\ell}$ for a prime $p$ and positive integer $\ell$.
Let $(u,v): \mathbb{F}_q^n\times \mathbb{F}_q^n\to \mathbb{F}_q$
be an inner product in $\mathbb{F}_q^n$.
If $C$ is an $[n,k]$ linear code, then its orthogonal complement
$C^{\perp}=\{u \in \mathbb{F}_q^n : (u,v)=0$ $\forall v \in C \}$ is a linear $[n,n-k]$ code called the dual code of $C$. We consider three types of linear codes depending on the intersection with their duals:
\begin{itemize}
\item If $C = C^{\perp}$, $C$ is termed self-dual. If the length of a self-dual code is $n$ then its dimension must be $n/2$.
\item If $C \subseteq C^{\perp}$, the code is self-orthogonal. A self-orthogonal code is also self-dual iff its dimension is a half of its length. Self-orthogonal codes with $k>n/2$ do not exist.
\item If $C\cap C^\perp$ consists only of the zero vector, the code is called LCD (linear complementary dual). If $C$ is an LCD code so is its dual code $C^\perp$.
    \end{itemize}

In this paper, we consider inner products of two types:
\begin{itemize}
\item Euclidean inner product, defined by
$$u\cdot v=\sum_{i=1}^n u_iv_i\in\F_q, \; u=(u_1,\ldots,u_n), v=(v_1,\ldots,v_n)\in\F_q^n.$$
\item Hermitian inner product
$$(u,v)=\sum_{i=1}^n u_i\overline{v}_i\in\F_q,$$
where $\overline{a}=a^{\sqrt{q}}$ if $q$ is a square.
\end{itemize}

The most general definition for equivalence of linear codes of length $n$ over the finite field $\mathbb{F}_q$ is based on the action of the semilinear isometries group $\mathcal{M}_n^*(q)=\Mon_n(\F_q^\ast)\rtimes {\rm Aut}(\F_q)\leq \Gamma_n(\F_q)$ on the vector space $\F_q^n$, where $\Gamma_n(\F_q)$ is the set of all semilinear mappings, i.e. the general semilinear group, $\Mon_n(\F_q^\ast)$ is the group of all monomial $n\times n$ matrices over $\F_q$, and  ${\rm Aut}(\F_q)$ is the automorphisms group of the field $\F_q$.
Linear $q$-ary codes $C$ and $C^{ \prime}$ of the same length $n$ are equivalent
whenever $C'=CT$ for some $T\in \mathcal{M}_n^*(q)$. If $CT=C$ for an element $T\in \mathcal{M}_n^*(q)$ then $T$ is called an automorphism of the code. The set of all automorphisms of $C$ form a group denoted by $\textrm{Aut}(C)$.

Any element $T\in \mathcal{M}_n^*(q)$ can be written as $T=PD\tau$ where $P$ is a permutation
matrix (permutation part), $D$ is a diagonal matrix (diagonal part), and $\tau\in \Aut(\F_q)$. Note that in the case of prime $q$, $\mathcal{M}_n^*(q)=\Mon_n(\F_q^\ast)$, and if $q=2$ then $\mathcal{M}_n^*(q)\cong \textrm{Sym}(n)$ where $\textrm{Sym}(n)$ is the symmetric group of degree $n$. We consider here only the permutation automorphisms of a linear code.

Let $C$ be a linear $q$-ary code with a permutation automorphism $\sigma\in \textrm{Sym}(n)$ of order $m$ (not necessarily prime). If $\sigma$ is a product of $s$ disjoint cycles, namely
\begin{equation}\label{sigma}
\sigma=\Omega_1\Omega_2\cdots\Omega_s,
\end{equation}
where the length of $\Omega_i$ is $l_i\ge 1$, $1\le i\le s$, then $m=\lcm(l_1,\ldots,l_s)$.

To describe the structure of the considered codes, we need the factor rings $\mathcal{R}_{l_i}=\F_q[x]/(x^{l_i}-1)$, where $\F_q[x]$ is the ring of polynomials in the indeterminate $x$ with coefficients in $\F_q$.
Define the map $\phi: \F_q^n\to \mathcal{R}_{l_1}\times \mathcal{R}_{l_2}\times\cdots\times \mathcal{R}_{l_s}$ by
\begin{align*}
  \phi(c) & =(c_1(x),\ldots,c_s(x)) \\
   & =(c_{10}+c_{11}x+\cdots+c_{1,l_1-1}x^{l_1-1},\ldots,c_{s0}+c_{s1}x+\cdots+c_{s,l_s-1}x^{l_s-1})
\end{align*}
for $c=(c_{10},c_{11},\ldots,c_{1,l_1-1},\ldots,c_{s0},c_{s1},\ldots,c_{s,l_s-1})\in \F_q^n$.

Any submodule of $\mathcal{R}'=\mathcal{R}_{l_1}\times \mathcal{R}_{l_2}\times\cdots\times \mathcal{R}_{l_s}$ is called a generalized quasi-cyclic (GQC) code of block lengths $(l_1,\ldots,l_s)$, which is a linear code of length $l_1+\cdots+l_s$ over $\F_q$ \cite{GQC,Siap}. Decomposition into constituents for GQC codes is given by Esmaeili and Yari in \cite{Yari}. We will not consider this decomposition here.

%
%
%

We define the subcodes of $C$  $$F_{\sigma}(C):=\lbrace v\in C\mid v\sigma=v \rbrace$$ and
$$E_{\sigma}(C):=\{ v=(v_1,\ldots,v_n)\in C: \sum_{i\in \Omega_j}v_i=0 \; \mbox{in} \; \F_q \; \textrm{for all}\; j=1, \ldots, s \}.$$
Note that $v\in \Fsig$ if and only if $v\in C$ and $v|_{\Omega_j}$ is constant for $j=1,\ldots,s$. Therefore, we define the map $\pi: \Fsig\rightarrow\F_q^{s}$ by $(\pi(v))_j=v_i$ for some $ i \in \Omega_j$, $j=1, 2, \ldots, s$, $v\in F_\sigma(C)$.

We use also the map $\psi:C\to\F_q^s$ defined by
$$\psi(v)=(\sum_{i\in\Omega_1} v_i,\ldots,\sum_{i\in\Omega_s} v_i),$$
where $v_i$ are the coordinates of the vector $v\in\F_q^n$, $i=1,\ldots,n$. This is a homomorphism and the kernel of $\psi$ is the subcode $\Esig$, or
$\ker \psi=\Esig$.

Instead of studying linear codes over rings that are Cartesian products of different factor rings, we prefer to focus on some specific cases. Sylow's first theorem gives us a reason to narrow down the considered cases. More precisely, we use the corollary, also known as Cauchy's theorem.

\begin{theorem}{\rm (Cauchy)}\label{thm:Sylow}
Given a finite group $G$ and a prime number $r$ dividing the order of $G$, then there exists an element (and thus a cyclic subgroup generated by this element) of order $r$ in $G$.
\end{theorem}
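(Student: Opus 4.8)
The plan is to give the self-contained counting proof due to McKay, which proves the statement directly rather than deriving it from Theorem~\ref{thm:Sylow}.

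First I would introduce the set
$$X=\{(a_1,a_2,\ldots,a_r)\in G^r : a_1a_2\cdots a_r=e\},$$
where $e$ denotes the identity of $G$. Its first $r-1$ entries can be chosen arbitrarily in $G$, after which $a_r=(a_1a_2\cdots a_{r-1})^{-1}$ is uniquely determined; hence $|X|=|G|^{r-1}$, and since $r$ divides $|G|$ we get $r\mid |X|$.

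Next I would let the cyclic group $\Z/r\Z$ act on $X$ by cyclic rotation of coordinates, the generator sending $(a_1,a_2,\ldots,a_r)$ to $(a_2,\ldots,a_r,a_1)$. This is a well-defined action because $a_1a_2\cdots a_r=e$ implies $a_2\cdots a_ra_1=a_1^{-1}(a_1a_2\cdots a_r)a_1=e$. Since $r$ is prime, the orbit--stabilizer theorem together with Lagrange's theorem forces every orbit to have size $1$ or $r$; writing $t$ for the number of size-one orbits, we obtain $t\equiv|X|\equiv0\pmod r$. An element of $X$ is fixed by the action precisely when all of its coordinates coincide, i.e.\ it has the form $(a,\ldots,a)$ with $a^r=e$. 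Since $(e,\ldots,e)$ is one such fixed point, $t\ge1$, and combined with $r\mid t$ and $r\ge2$ this yields $t\ge r\ge2$. Hence there is some $a\in G$ with $a\ne e$ and $a^r=e$; as $r$ is prime, $a$ has order exactly $r$, and $\langle a\rangle$ is the desired cyclic subgroup of order $r$.

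I do not anticipate a real obstacle here: the two things to verify carefully are that cyclic rotation preserves membership in $X$ and that an action of a group of prime order $r$ has only orbits of size $1$ or $r$, both of which are immediate. The one genuinely clever ingredient — and the heart of the argument — is the observation that freely choosing the first $r-1$ coordinates makes $|X|$ a power of $|G|$, hence divisible by $r$.
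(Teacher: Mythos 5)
Your argument is correct and complete: this is the standard McKay counting proof of Cauchy's theorem, and every step checks out --- $|X|=|G|^{r-1}$ is divisible by $r$, cyclic rotation preserves the relation $a_1\cdots a_r=e$ (by conjugation, as you note), primality of $r$ forces orbits of size $1$ or $r$, and the fixed-point count $t\equiv 0\pmod r$ together with $t\ge 1$ yields a nonidentity element $a$ with $a^r=e$, hence of order exactly $r$. Be aware, though, that the paper offers no proof of this statement at all: it is quoted as a classical fact, presented as a corollary of Sylow's first theorem, and used only as motivation for restricting attention to automorphisms of prime order. So there is nothing in the paper to compare against; your contribution is to supply a self-contained elementary proof where the author relies on a citation-free appeal to a well-known theorem. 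The only blemish is in your framing sentence, where you say you avoid ``deriving it from Theorem~\ref{thm:Sylow}'' --- that label refers to the very statement you are proving; you presumably mean that you avoid invoking Sylow's first theorem, which is indeed a virtue of the McKay argument since Cauchy's theorem is logically prior to (and often used in proofs of) Sylow's theorems.
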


This means that if a linear code has a nontrivial automorphism group, it has an automorphism of prime order. Therefore, if we need to classify all codes with given parameters, having a nontrivial automorphism group, it is sufficient to consider only the automorphisms of prime orders. However, one can get many more results by combining automorphisms and from this point of view study codes with automorphisms of composite order or codes invariant under the action of a given group (see for example \cite{Borello_Nebe,Borello_2p,Nebe}, as well as the examples in Section \ref{sect:exam}).

In this work, we focus on one automorphism of a linear code without considering its interaction with other automorphisms. The following lemma gives us a partial motivation to consider only permutational automorphisms, especially when their order is prime.

\begin{lemma}{\rm\cite[W. C. Huffman]{Huffman3}}\label{lemma:perm}
Let $C$ be a linear code over $\F_q$ with an automorphism $T = PD\tau$ of prime order $r$ where $r\nmid (q - 1)$ and $r\nmid |{\rm Gal}(\F_q)|$. Then there exists a code $C'$ equivalent to $C$ where $P\in\mathrm{Aut}(C')$.
\end{lemma}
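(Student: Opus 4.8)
The plan is to conjugate $T$ inside $\mathcal{M}_n^*(q)$ by a well-chosen diagonal matrix $M$ so that $M^{-1}TM$ equals its own permutation part $P$. This suffices to prove the lemma: since $CT=C$ and $M\in\mathcal{M}_n^*(q)$, the code $C'=CM$ is equivalent to $C$, and $C'(M^{-1}TM)=(CM)M^{-1}TM=CTM=CM=C'$, so if $M^{-1}TM=P$ then $C'P=C'$, i.e.\ $P\in\Aut(C')$.

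First I would remove the field-automorphism part. The map $PD\tau\mapsto\tau$ is a group homomorphism $\mathcal{M}_n^*(q)\to\Aut(\F_q)$, so $\ord(\tau)$ divides $\ord(T)=r$; being a divisor of the prime $r$, it equals $1$ or $r$, and if it were $r$ then, by Lagrange, $r$ would divide $|\mathrm{Gal}(\F_q)|$, contrary to hypothesis. Hence $\tau=\mathrm{id}$ and $T=PD$ is an $\F_q$-linear monomial automorphism of $C$ of order $r$. Similarly, $PD\tau\mapsto P$ is a homomorphism into $\sym(n)$, so the underlying permutation $\sigma$ of $P$ has order dividing the prime $r$, and therefore each cycle of $\sigma$ has length $1$ or $r$.

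Next I would locate the scalars. Writing the action of $T$ on the standard basis as $e_i\mapsto a_ie_{\sigma(i)}$ with $a_i\in\F_q^{\ast}$, the relation $T^r=I$ yields $a_i^r=1$ at each fixed point $i=\sigma(i)$, hence $a_i=1$ because $r\nmid q-1$; and it yields $\prod_{j=0}^{r-1}a_{\sigma^j(i)}=1$ around each $r$-cycle $(i,\sigma(i),\dots,\sigma^{r-1}(i))$. I would then set $M=\mathrm{diag}(b_1,\dots,b_n)$ with $b_i=1$ at every fixed point and, on an $r$-cycle $(i_0,\dots,i_{r-1})$ (indices mod $r$, $\sigma(i_j)=i_{j+1}$), $b_{i_0}=1$ and $b_{i_{j+1}}=a_{i_j}b_{i_j}$. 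This is consistent exactly because the closing relation reads $b_{i_0}=a_{i_{r-1}}b_{i_{r-1}}=\prod_{j}a_{i_j}=1$, which we have just verified. A short computation then shows that $M^{-1}TM$ sends $e_i\mapsto e_{\sigma(i)}$ for every $i$, i.e.\ $M^{-1}TM=P$.

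The only genuine content of the argument is the pair of scalar identities $a_i=1$ at fixed points and $\prod a=1$ around $r$-cycles; this is precisely where the hypotheses $r\nmid q-1$ and $r\nmid|\mathrm{Gal}(\F_q)|$ are used, and it is exactly what makes the recursion defining $M$ solvable. The rest is bookkeeping — fixing once and for all whether $T$ acts on row or on column vectors, and using the elementary fact that conjugating a monomial matrix by a diagonal one leaves the permutation part unchanged and merely rescales the diagonal part.
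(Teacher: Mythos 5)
Your argument is correct. Note that the paper itself gives no proof of this lemma --- it is quoted as a known result of Huffman --- so there is nothing to compare against line by line; but your reconstruction (kill $\tau$ via the projection $\mathcal{M}_n^*(q)\to\Aut(\F_q)$ and the hypothesis $r\nmid|\mathrm{Gal}(\F_q)|$, then use $T^r=I$ to get $a_i^r=1$ at fixed points and $\prod_j a_{\sigma^j(i)}=1$ around $r$-cycles, and solve the cocycle condition $b_{\sigma(i)}=a_ib_i$ with a diagonal conjugating matrix $M$ so that $M^{-1}TM=P$) is exactly the standard argument from Huffman's work, and every step checks out: $r\nmid q-1$ with $r$ prime gives $\gcd(r,q-1)=1$ so $a_i^r=1$ forces $a_i=1$, and the product relation around each $r$-cycle is precisely the consistency condition needed to close the recursion defining $M$.
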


By $\mathbf{1}$ and $\mathbf{0}$ we denote the all-ones vectors and the zero vector of the corresponding length, respectively.

\section{Permutation automorphisms of order $m$ relatively prime to the characteristic of the field}
\label{sect:perm-coprime}

Let $\gcd (m,\textrm{char}(\F_q))=1$. The following theorem gives a very important information about the structure of a linear code $C$ having a permutation automorphism of order $m$.

 \begin{theorem}\label{codecomp}
 Let $C \leq \F_ q^n$ be a linear code with a permutation automorphism $\sigma \in \textrm{Sym}(n)$ of order $m$ such that $\gcd(m,q)=1$. Then $C=F_\sigma(C) \oplus E_\sigma(C)$. Both $F_\sigma(C)$ and $E_\sigma(C)$ are $\sigma$-invariant and orthogonal to each other.
 \end{theorem}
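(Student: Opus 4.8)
The natural approach is to use the projection operators associated with the averaging idempotent. Since $\gcd(m,q)=1$, the integer $m$ is invertible in $\F_q$; write $\bar m$ for its inverse. Think of the cyclic group $\langle\sigma\rangle$ acting on $\F_q^n$, and define the linear operator $\rho=\bar m\sum_{t=0}^{m-1}\sigma^t$ on $\F_q^n$. First I would check that $\rho$ is an idempotent ($\rho^2=\rho$) commuting with $\sigma$, and that its image is exactly the fixed space $\{v : v\sigma=v\}$ of $\sigma$ while its kernel is $\{v : \rho(v)=\mathbf 0\}$; consequently $\F_q^n=\operatorname{Im}\rho\oplus\ker\rho$ as $\sigma$-modules. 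Because $\sigma$ is an automorphism of $C$, the subspace $C$ is $\sigma$-invariant, hence $\rho(C)\subseteq C$ and $\rho$ restricts to an idempotent on $C$, giving $C=\rho(C)\oplus\ker(\rho|_C)$.

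**Identifying the two summands.** The second step is to match $\rho(C)$ with $F_\sigma(C)$ and $\ker(\rho|_C)$ with $E_\sigma(C)$. The inclusion $\rho(C)\subseteq F_\sigma(C)$ is immediate, and conversely any $v\in F_\sigma(C)$ satisfies $\rho(v)=\bar m\cdot m\cdot v=v$, so $\rho(C)=F_\sigma(C)$. For the kernel, the key observation is that for $v\in C$ the vector $\rho(v)$ is constant on each cycle $\Omega_j$, and the constant value on $\Omega_j$ is $\bar m\sum_{i\in\Omega_j}v_i$ (here it is convenient that all cycles with indices in a common $\langle\sigma\rangle$-orbit are permuted cyclically; summing $v$ over one period of $\sigma$ lands the $i$-th coordinate on the sum over the whole cycle through $i$). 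Hence $\rho(v)=\mathbf 0$ if and only if $\sum_{i\in\Omega_j}v_i=0$ for every $j$, i.e. if and only if $v\in E_\sigma(C)$. This also re-derives that $E_\sigma(C)=\ker\psi$ as already noted in the preliminaries.

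**Invariance and orthogonality.** Both summands are $\sigma$-invariant: $F_\sigma(C)$ is the fixed space of $\sigma$ in $C$, which is obviously $\sigma$-stable, and $E_\sigma(C)=\ker(\rho|_C)$ is $\sigma$-stable because $\rho$ commutes with $\sigma$ (equivalently, $\psi(v\sigma)=\psi(v)$ since $\sigma$ merely permutes coordinates within each $\Omega_j$). For orthogonality, take $u\in F_\sigma(C)$ and $v\in E_\sigma(C)$; since $u$ is constant on each cycle, say $u|_{\Omega_j}\equiv a_j$, one computes $u\cdot v=\sum_j a_j\sum_{i\in\Omega_j}v_i=\sum_j a_j\cdot 0=0$ in the Euclidean case, and the Hermitian case is identical after conjugating the $a_j$ (conjugation of a constant vector over a cycle is still constant). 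This finishes the proof.

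**Main obstacle.** None of the steps is deep; the only point requiring care is the bookkeeping in identifying $\ker(\rho|_C)$ with $E_\sigma(C)$ — specifically verifying that $\rho(v)$ is constant on each cycle with the claimed value. One must be careful that a single $\langle\sigma\rangle$-orbit of coordinates can consist of several cycles $\Omega_j$ only if $m>l_j$ is a multiple; in the setting emphasized here (where all nontrivial cycles have the common length $m$, plus fixed points) each $\Omega_j$ is itself a full $\langle\sigma\rangle$-orbit and the computation is transparent, but the statement as written allows general cycle lengths $l_i$, so one should phrase the averaging argument per $\langle\sigma\rangle$-orbit and then note it refines correctly to the $\Omega_j$. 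I expect this indexing check to be the one place where a careless proof could slip.
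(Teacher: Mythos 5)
Your proposal is correct and is essentially the paper's own argument: the paper forms $w=\sum_{i=0}^{m-1}v\sigma^i$ and writes $v=\frac{1}{m}w+(v-\frac{1}{m}w)$, which is exactly your idempotent $\rho=\frac{1}{m}\sum_{t=0}^{m-1}\sigma^t$ applied to $C$, followed by the same triviality-of-intersection and orthogonality checks. The one detail to correct is the constant value of $\rho(v)$ on a cycle $\Omega_j$ of length $l_j$: summing over $t=0,\dots,m-1$ wraps around $\Omega_j$ exactly $m/l_j$ times, so that value is $l_j^{-1}\sum_{i\in\Omega_j}v_i$ rather than $m^{-1}\sum_{i\in\Omega_j}v_i$; since $l_j\mid m$ and $\gcd(m,q)=1$ make $l_j$ (and $m/l_j$) units in $\F_q$, your identification $\ker(\rho|_C)=E_\sigma(C)$ still goes through, as you anticipated in flagging this bookkeeping as the delicate point.
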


 \begin{proof} Take an arbitrary codeword $v\in C$ and consider $w=\sum_{i=0}^{m-1}v\sigma^i$. Since $v\sigma^m=v$, $w\sigma=w$ and so $w\in F_{\sigma}(C)$. Let $x=v-\frac{1}{m}w$. If $v\vert_{\Omega_i}=(v_{i1},\ldots,v_{il_i})$ then $w\vert_{\Omega_i}=\frac{m}{l_i}(\sum_{j=1}^{l_i}v_{ij},\ldots,\sum_{j=1}^{l_i}v_{ij})$. Hence
 $$x\vert_{\Omega_i}=(v_{i1}-\frac{1}{l_i}\sum_{j=1}^{l_i}v_{ij},\ldots,v_{il_i}-\frac{1}{l_i}\sum_{j=1}^{l_i}v_{ij}) \; \Longrightarrow
 \; \sum_{j=1}^{l_i}x_j=\sum_{j=1}^{l_i}v_{ij}-\sum_{j=1}^{l_i}v_{ij}=0.$$
 It follows that $x\in E_{\sigma}(C)$ and so $v=\frac{1}{m}w+x\in F_{\sigma}(C)+E_{\sigma}(C)$. This proves that $C=F_{\sigma}(C)+E_{\sigma}(C)$.

 If $v\in F_{\sigma}(C)\cap E_{\sigma}(C)$, then $v\vert_{\Omega_i}=(\underbrace{\alpha,\ldots,\alpha}_{l_i})=\alpha\mathbf{1}$ and $\alpha(\underbrace{1+\cdots+1}_{l_i})=l_i\alpha=0$. Hence $\alpha=0$, $v=\mathbf{0}$, $F_{\sigma}(C)\cap E_{\sigma}(C)=\{\mathbf{0}\}$ and therefore $C=F_\sigma(C) \oplus E_\sigma(C)$. Obviously, both subcodes are $\sigma$-invariant.

 If $v\in E_{\sigma}(C)$ and $w\in F_{\sigma}(C)$, $w\vert_{\Omega_i}=w_i\mathbf{1}$ for $i=1,\ldots,s$, then
 $$(v,w)=\sum_{i=1}^s (w_i'\sum_{j\in \Omega_i}v_j)=0,$$ where $w_i'=w_i$ in the case of Euclidean inner product, and $w_i'=\overline{w}_i$ if the inner product is Hermitian. Hence both subcodes are orthogonal to each other.
 \end{proof}

  The following theorem proves important properties of the projection code $C_{\pi}=\pi(F_{\sigma}(C))$ if $C$ is a self-dual or LCD code with respect to the considered inner product.

  \begin{theorem}\label{thm:Fsig}
  Assume $l_1\equiv l_2\equiv \cdots\equiv l_s\equiv l\not\equiv 0\pmod{p}$. Then:
  \begin{itemize}
  \item[(1)] if $C$ is self-orthogonal so is $C_{\pi}$;
   \item[(2)] if $C$ is self-dual so is $C_{\pi}$;
    \item[(3)] if $C$ is LCD so is $C_{\pi}$.
  \end{itemize}
  \end{theorem}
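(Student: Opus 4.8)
The plan is to transfer the hypotheses on $C$ through the map $\pi\colon F_\sigma(C)\to \F_q^s$ to $C_\pi=\pi(F_\sigma(C))$. The key observation is that $\pi$ is, up to a scaling, an isometry of inner-product spaces. Indeed, if $v,w\in F_\sigma(C)$ with $v|_{\Omega_i}=v_i\mathbf{1}$ and $w|_{\Omega_i}=w_i\mathbf{1}$ (using that vectors in $F_\sigma(C)$ are constant on each cycle), then in the Euclidean case
\[
(v,w)=\sum_{i=1}^s l_i\, v_i w_i = l\sum_{i=1}^s v_i w_i = l\,(\pi(v),\pi(w)),
\]
using $l_1=\cdots=l_s=l$; the Hermitian case is identical with $w_i$ replaced by $\overline{w}_i$. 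Since $l\not\equiv 0\pmod p$, the scalar $l$ is a unit in $\F_q$, so $(v,w)=0$ in $\F_q^n$ if and only if $(\pi(v),\pi(w))=0$ in $\F_q^s$. Also $\pi$ is injective on $F_\sigma(C)$ (a codeword constant on each cycle is determined by its values on the cycles), hence a linear bijection $F_\sigma(C)\to C_\pi$ preserving orthogonality relations up to the unit $l$.

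First I would record this isometry-up-to-unit property as the engine of the proof. For part (1), suppose $C$ is self-orthogonal. Take any $v,w\in F_\sigma(C)$; then $(v,w)=0$ because $F_\sigma(C)\subseteq C\subseteq C^\perp$, hence $(\pi(v),\pi(w))=0$, which says exactly that $C_\pi\subseteq C_\pi^\perp$, i.e. $C_\pi$ is self-orthogonal. For part (3), suppose $C$ is LCD. I would argue by contradiction: if $C_\pi$ is not LCD, there is a nonzero $u\in C_\pi\cap C_\pi^\perp$; writing $u=\pi(v)$ for a (nonzero) $v\in F_\sigma(C)$, the relation $(\pi(v),\pi(w))=0$ for all $w\in F_\sigma(C)$ pulls back through the scalar $l$ to $(v,w)=0$ for all $w\in F_\sigma(C)$. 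But by Theorem \ref{codecomp}, $C=F_\sigma(C)\oplus E_\sigma(C)$ with the two summands mutually orthogonal, so $(v,w')=0$ for all $w'\in C$ as well; thus $v\in C\cap C^\perp=\{\mathbf 0\}$, contradicting $u\neq \mathbf 0$.

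For part (2) I would first note that self-dual implies self-orthogonal, so $C_\pi$ is self-orthogonal by part (1), and it remains only to check the dimension $\dim C_\pi = s/2$. Here I would use $\pi$ being injective on $F_\sigma(C)$, so $\dim C_\pi=\dim F_\sigma(C)$, together with the decomposition $C=F_\sigma(C)\oplus E_\sigma(C)$: one has $\dim F_\sigma(C)+\dim E_\sigma(C)=\dim C=n/2$. The point to pin down is that $\dim E_\sigma(C)=\dim F_{\sigma^\perp}$-type quantities balance so that $\dim F_\sigma(C)=s/2$; I expect this to follow from applying the whole setup to $C^\perp$ (which is also self-dual with the same automorphism $\sigma$) and from the fact that $\pi$ restricted to $F_\sigma(C)$ and the analogous map on $F_\sigma(C^\perp)$ land in $C_\pi$ and its Euclidean (resp. Hermitian) dual. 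The main obstacle is exactly this dimension count in part (2): establishing $\dim F_\sigma(C)=s/2$ cleanly, since Theorem \ref{codecomp} alone gives only the sum $\dim F_\sigma(C)+\dim E_\sigma(C)=n/2$. I would handle it by showing $C_\pi^\perp\subseteq \pi(F_\sigma(C^\perp))$ and a reverse inclusion, forcing $\dim C_\pi+\dim C_\pi^\perp=s$ to split evenly via the self-duality of $C$; the parity constraint on $s$ (needed for $s/2$ to make sense) comes for free from $C_\pi$ being a self-dual code of length $s$.
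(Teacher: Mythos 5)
Your parts (1) and (3) are correct and follow the paper's own route: the scaling identity $(v,w)=l(\pi(v),\pi(w))$ on $F_\sigma(C)$ with $l$ a unit, plus (for LCD) the orthogonality $F_\sigma(C)\perp E_\sigma(C)$ from Theorem \ref{codecomp} to promote $v\perp F_\sigma(C)$ to $v\perp C$. (A harmless slip: the hypothesis gives only $l_1\equiv\cdots\equiv l_s\pmod p$, not equality of the $l_i$ as integers, but the computation only uses their images in $\F_q$, so nothing breaks.)

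The gap is in part (2), and you have flagged it yourself. You reduce self-duality of $C_\pi$ to the dimension count $\dim C_\pi=s/2$ and leave the key inclusion unproved; the closing appeal to ``the parity constraint comes for free from $C_\pi$ being a self-dual code'' is circular, since that is what is being proved. No dimension count is needed. The inclusion you name, $C_\pi^\perp\subseteq\pi(F_\sigma(C^\perp))=C_\pi$, can be shown directly, and together with $C_\pi\subseteq C_\pi^\perp$ from part (1) it finishes the proof. Concretely: given $w=(w_1,\ldots,w_s)\in C_\pi^\perp$, form the lift $w_F\in\F_q^n$ that is constant equal to $w_i$ on $\Omega_i$. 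Note that $w_F$ need not lie in $C$ a priori, so you cannot invoke Theorem \ref{codecomp} to get $w_F\perp E_\sigma(C)$; instead you check it by hand: for $u\in E_\sigma(C)$,
\begin{equation*}
(u,w_F)=\sum_{i=1}^s \Bigl(w_i'\sum_{j\in\Omega_i}u_j\Bigr)=0,
\end{equation*}
because every cycle-sum of $u$ vanishes. Combined with $(v,w_F)=l(\pi(v),w)=0$ for all $v\in F_\sigma(C)$, this gives $w_F\perp C=F_\sigma(C)\oplus E_\sigma(C)$, hence $w_F\in C^\perp=C$; being constant on each cycle, $w_F\in F_\sigma(C)$, so $w=\pi(w_F)\in C_\pi$. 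This is exactly the computation your sketch is missing, and it is the only substantive difference between your proposal and the paper's proof.
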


  \begin{proof}
   If $v=(v_1,\ldots,v_n)$, by $v'$ we denote the vector $v$ if the considered inner product is Euclidean, and the vector $\overline{v}=(\overline{v}_1,\ldots,\overline{v}_n)$, if the inner product is Hermitian.

   Let $v=(\underbrace{v_1,\ldots,v_1}_{l_1},\ldots,\underbrace{v_s,\ldots,v_s}_{l_s})$ and $w=(\underbrace{w_1,\ldots,w_1}_{l_1},\ldots,\underbrace{w_s,\ldots,w_s}_{l_s})$ be codewords in $\Fsig$. Then
   $$(v,w)=\sum_{i=1}^s l_iv_iw_i'=l\sum_{i=1}^s v_iw_i'=l(\pi(v),\pi(w)).$$
   \begin{itemize}
   \item[(1)] If $C$ is a self-orthogonal code, then $(v,w)=0$ for any two codewords $v,w\in\Fsig$. Hence $(\pi(v),\pi(w))=0$ $\forall v,w\in\Fsig$ and therefore $C_{\pi}$ is also self-orthogonal.
\item[(2)] Let $C$ be a self-dual code. Hence $C$ is also self-orthogonal and therefore $C_{\pi}$ is a self-orthogonal code, or $C_{\pi}\subseteq C_{\pi}^{\perp}$. Take $w=(w_1,\ldots,w_s)\in C_{\pi}^{\perp}$, and $w_F=(\underbrace{w_1,\ldots,w_1}_{l_1},\ldots,\underbrace{w_s,\ldots,w_s}_{l_s})\in\F_q^n$. Then
    $$(v,w_F)=l(\pi(v),w)=0 \; \forall v\in\Fsig.$$
    Furthermore, $$(u,w_F)=\sum_{i=1}^s (w'_i\sum_{j\in\Omega_i} u_j)=0 \; \forall u\in\Esig.$$
    Hence, $w_F\perp C$ and so $w_F\in C^{\perp}=C$. It follows that $w_F\in\Fsig$ and $w\in C_{\pi}$. This proves that $C_{\pi}^\perp= C_{\pi}$ and so $C_{\pi}$ is a self-dual code.
\item[(3)] Consider now the case of an LCD code $C$. Take $w=(w_1,\ldots,w_s)\in C_{\pi}\cap C_{\pi}^{\perp}$, and $w_F=\pi^{-1}(w)=(\underbrace{w_1,\ldots,w_1}_{l_1},\ldots,\underbrace{w_s,\ldots,w_s}_{l_s})$. Then $w_F\in\Fsig$ and
    $$(v,w_F)=l(\pi(v),w)=0 \; \forall v\in\Fsig.$$
    Hence $w_F\perp \Fsig$ and $w_F\perp \Esig$, which means that $w_F\in C^\perp\cap C$. Since $C$ is an LCD code, $w_F=\mathbf{0}$ and so $w$ is the zero vector.
   It follows that $C_{\pi}\cap C_{\pi}^{\perp}=\{\mathbf{0}\}$ and $C_{\pi}$ is an LCD code.
   \end{itemize}
Thus the theorem is proved.
  \end{proof}

 Next, we focus on quasi-cyclic and almost quasi-cyclic codes.

\subsection{Quasi-cyclic codes}

This is the case when $l_1=l_2=\ldots=l_s=m$ and $n=sm$. Then for the fixed subcode we have the following corollary, that follows from Theorem \ref{thm:Fsig}.

 \begin{corollary}\label{thm:FsigQS}
  Let $C$ be a $q$-ary quasi-cyclic code of length $sm$ and index $s$, where\\ $\gcd(m,q)=1$. Then:
  \begin{itemize}
  \item[(1)] if $C$ is self-orthogonal so is $C_{\pi}$;
   \item[(2)] if $C$ is self-dual so is $C_{\pi}$;
    \item[(3)] if $C$ is LCD so is $C_{\pi}$.
  \end{itemize}
  \end{corollary}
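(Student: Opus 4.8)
The plan is to observe that Corollary \ref{thm:FsigQS} is an immediate specialization of Theorem \ref{thm:Fsig}. Indeed, a $q$-ary quasi-cyclic code of length $sm$ and index $s$ is, by the decomposition set up in Section \ref{Preliminaries}, precisely a linear code invariant under a permutation $\sigma$ that is a product of $s$ disjoint cycles $\Omega_1,\ldots,\Omega_s$, each of length $l_1=\cdots=l_s=m$. Hence all block lengths are equal to the common value $l:=m$.

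First I would check that the hypothesis of Theorem \ref{thm:Fsig}, namely $l_1\equiv\cdots\equiv l_s\equiv l\not\equiv 0\pmod p$, is satisfied. Since all $l_i$ equal $m$, the congruences $l_1\equiv\cdots\equiv l_s\equiv m$ hold trivially. The condition $m\not\equiv 0\pmod p$ is exactly the statement $p\nmid m$, and since $p=\Char(\F_q)$ and $q$ is a power of $p$, this is equivalent to $\gcd(m,q)=1$, which is assumed in the corollary. Thus the standing hypothesis of Theorem \ref{thm:Fsig} is in force.

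Consequently, all three conclusions of Theorem \ref{thm:Fsig} apply verbatim: if $C$ is self-orthogonal then $C_\pi=\pi(F_\sigma(C))$ is self-orthogonal; if $C$ is self-dual then $C_\pi$ is self-dual; and if $C$ is LCD then $C_\pi$ is LCD. This proves items (1), (2) and (3) of the corollary. I would write the proof as a single short paragraph, essentially: ``This follows directly from Theorem \ref{thm:Fsig} with $l_1=\cdots=l_s=m$, noting that $\gcd(m,q)=1$ is equivalent to $m\not\equiv 0\pmod p$ where $p=\Char(\F_q)$.''

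There is no real obstacle here; the only thing worth being careful about is the translation between $\gcd(m,q)=1$ and $m\not\equiv 0\pmod p$, which rests on $q$ being a prime power of $p$, so that a prime divides $m$ and $q$ simultaneously if and only if that prime is $p$. Everything else is a direct quotation of the already-proved Theorem \ref{thm:Fsig}.
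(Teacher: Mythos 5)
Your proposal is correct and is exactly the paper's route: the corollary is stated there as an immediate specialization of Theorem \ref{thm:Fsig} to the case $l_1=\cdots=l_s=m$, with $\gcd(m,q)=1$ supplying the hypothesis $m\not\equiv 0\pmod p$ since $q$ is a power of $p$. Nothing further is needed.
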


   If $m$ and $q$ are relatively prime, so $p\nmid m$, $x^m-1$ can be written in the form \cite{LSole}
   \begin{equation}\label{xm-1}
x^m-1=\delta g_0(x)g_1(x)\cdots g_r(x)h_1(x)h_1^*(x)\cdots h_t(x)h_t^*(x),
\end{equation}
where $\delta\in\F_q^*$, $g_0=x-1$, $g_1,\ldots,g_r$ are associated with their reciprocal polynomials, and $h_i^*(x)$ is the reciprocal polynomial of $h(x)$, $i=1,\ldots,t$. Then
$$\mathcal{R}_m=(\bigoplus_{i=0}^r \F_q[x]/(g_i))\oplus (\bigoplus_{i=1}^t (\F_q[x]/(h_i)\oplus \F_q[x]/(h^*_i)),$$
and $\F_q[x]/(g_i)$, $i=0,1,\ldots,r$, $\F_q[x]/(h_i)$ and $\F_q[x]/(h^*_i)$, $j=1,\ldots,t$, are fields (extensions of $\F_q$). In some cases it is more suitable to consider these fields as minimal ideals in $\mathcal{R}_m$, generated respectively by the polynomials $\frac{x^m-1}{g_i(x)}$, $i=0,1,\ldots,r$, $\frac{x^m-1}{h_j(x)}$ and $\frac{x^m-1}{h_j^*(x)}$, $j=1,\ldots,t$. Denote these ideals by $G_0,G_1,\ldots,G_r$, $H_1'$, $H_1''$, $\ldots, H_t'$, $H_t''$, respectively, and so
$$\mathcal{R}_m=(\bigoplus_{i=0}^r G_i)\oplus (\bigoplus_{i=1}^t (H_j'\oplus H_j'').$$

In this case the map $\phi$ is defined in the following way:
$$\phi:\F_q^{ms}\to\mathcal{R}_m^s, \; \phi(c)=(c_1(x),\ldots,c_s(x))$$

\begin{lemma}{\rm \cite[Lemma 3.1]{LSole}}
The map $\phi$ induces a one-to-one correspondence
between quasi-cyclic codes over $\F_q$ of index $s$ and length $ms$
and linear codes over $\mathcal{R}_m$ of length $s$.
\end{lemma}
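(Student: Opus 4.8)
The plan is to exhibit the map $\phi$ explicitly as a bijection and verify that it respects the relevant module structures, so that quasi-cyclic codes of index $s$ correspond exactly to $\mathcal{R}_m$-submodules of $\mathcal{R}_m^s$. First I would recall that $\phi\colon\F_q^{ms}\to\mathcal{R}_m^s$ is the $\F_q$-linear map sending a word $c=(c_{10},\ldots,c_{1,m-1},\ldots,c_{s0},\ldots,c_{s,m-1})$ to the tuple of polynomials $(c_1(x),\ldots,c_s(x))$ with $c_j(x)=\sum_{t=0}^{m-1}c_{jt}x^t$. This is visibly a bijection: it is an $\F_q$-linear map between spaces of the same finite dimension $ms$, and it is injective since the coefficient tuple is recovered from the polynomials. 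Hence $\phi$ carries $\F_q$-subspaces of $\F_q^{ms}$ bijectively onto $\F_q$-subspaces of $\mathcal{R}_m^s$, and it only remains to match up the extra invariance on each side.

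The key step is to identify the shift-by-$s$ cyclic permutation on $\F_q^{ms}$ with multiplication by $x$ on $\mathcal{R}_m^s$. Let $\sigma$ be the permutation that acts as an $m$-cycle on each of the $s$ blocks of length $m$; a quasi-cyclic code of index $s$ and length $ms$ is by definition an $\F_q$-subspace $C$ with $C\sigma=C$. Now I would check that for each block, the cyclic shift of $(c_{j0},\ldots,c_{j,m-1})$ corresponds under the identification $\mathcal{R}_m=\F_q[x]/(x^m-1)$ to the polynomial $x\cdot c_j(x)\bmod (x^m-1)$, since $x\cdot c_j(x)=c_{j,m-1}+c_{j0}x+\cdots+c_{j,m-2}x^{m-1}$ after reduction. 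Thus $\phi(v\sigma)=x\cdot\phi(v)$ for every $v\in\F_q^{ms}$, where the right-hand side means coordinatewise multiplication by $x$ in $\mathcal{R}_m^s$. Consequently $C\sigma=C$ if and only if $\phi(C)$ is closed under multiplication by $x$; combined with $\F_q$-linearity and the fact that $\mathcal{R}_m$ is generated as an $\F_q$-algebra by $x$, this says precisely that $\phi(C)$ is an $\mathcal{R}_m$-submodule of $\mathcal{R}_m^s$, i.e. a linear code over $\mathcal{R}_m$ of length $s$. Since $\phi$ is a bijection, distinct quasi-cyclic codes map to distinct $\mathcal{R}_m$-submodules and every $\mathcal{R}_m$-submodule arises this way, which is the asserted one-to-one correspondence.

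The only mild obstacle is bookkeeping: one must be careful that ``quasi-cyclic of index $s$'' is taken with respect to the specific shift that moves coordinate $i$ to coordinate $i+s$ (equivalently, cyclically shifts each length-$m$ block), so that the blocking used to define $\phi$ is compatible with the permutation; with any other grouping the statement would need a relabelling of coordinates first. Beyond that, the argument is the standard one and carries over verbatim from the cyclic case $s=1$. Since this lemma is quoted from \cite[Lemma 3.1]{LSole}, a one-paragraph sketch along these lines suffices, and I would simply refer the reader there for the full details.
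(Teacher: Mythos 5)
Your argument is correct and is exactly the standard proof of this lemma: $\phi$ is an $\F_q$-linear bijection, the identity $\phi(v\sigma)=x\cdot\phi(v)$ identifies $\sigma$-invariance with closure under multiplication by $x$, and hence (since $x$ generates $\mathcal{R}_m$ over $\F_q$) quasi-cyclic codes correspond precisely to $\mathcal{R}_m$-submodules of $\mathcal{R}_m^s$. The paper gives no proof of its own but simply cites Ling and Sol\'{e}, and your sketch matches the argument given there, including the necessary care about which shift convention defines ``index $s$.''
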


Since $$\mathcal{R}_m^s=(\bigoplus_{i=0}^r G_i^s)\oplus (\bigoplus_{i=1}^t (H_j'^s\oplus H_j''^s),$$
then
$$\phi(C)=(\bigoplus_{i=0}^r C_i)\oplus (\bigoplus_{i=1}^t (C_j'\oplus C_j''),$$
where $C_i$ is a linear code over $G_i$, $i=0,1,\ldots,r$, $C_j'$ and $C_j''$ are linear codes over $H_j'$ and $H_j''$, respectively, $j=1,\ldots,t$, all of length $s$.

Since $G_0=(1+x+\cdots+x^{m-1})\lhd \mathcal{R}_m$, $G_0\cong \F_q[x]/(x-1)\cong\F_q$, $\phi^{-1}(C_0)$ is actually the fixed subcode $\Fsig$, and $$\phi(\Esig)=(\bigoplus_{i=1}^r C_i)\oplus (\bigoplus_{i=1}^t (C_j'\oplus C_j'').$$

In $\mathcal{R}_m^s$, we use the Hermitian inner product, defined in \cite{LSole}, namely
\begin{equation}\label{Hermitian}
    (u,v)=\sum_{i=1}^s u_i \overline{v_i} \ \ \mbox{for} \ u=(u_1,\ldots, u_s), \ v=(v_1, \ldots, u_s),
\end{equation}
where $\overline{v_i}=v_i(x^{-1}) =v_i(x^{m-1}).$
Note that $\overline{v_i}\in G_j$ if $v_i\in G_j$, $0\le j\le r$,  $\overline{v_i}\in H_j''$ if $v_i\in H_j'$, and $\overline{v_i}\in H_j'$ if $v_i\in H_j''$, $1\le j\le t$. Actually, this inner product is Euclidean over $G_0\cong\F_q$.

The following theorem follows from \cite[Theorem 4.2]{LSole}.

\begin{theorem} The linear code $C$ is Euclidean self-dual $q$-ary code if and only if
$C_{\pi}=C_0$ is Euclidean self-dual, $C_i$ for $i=1,\ldots,r$ are Hermitian self-dual codes, and $C_j''=(C_j')^\perp$ for $j=1,\ldots,t$ with respect to the Euclidean inner product.
\end{theorem}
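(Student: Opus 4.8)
The plan is to exploit the decomposition $\phi(C)=(\bigoplus_{i=0}^r C_i)\oplus(\bigoplus_{j=1}^t(C_j'\oplus C_j''))$ together with the fact that this sum is \emph{orthogonal} with respect to the Hermitian inner product \eqref{Hermitian}. The key structural observation, already recorded after \eqref{Hermitian}, is that conjugation $v_i\mapsto\overline{v_i}$ maps each field $G_j$ to itself and swaps $H_j'$ with $H_j''$. Consequently, if $u$ lies in the $C_i$-component and $v$ in the $C_{i'}$-component with $i\neq i'$ (or in any two distinct constituent positions among the $G$'s and $H$'s that are not a mutually conjugate $H_j',H_j''$ pair), then $\sum_{k=1}^s u_k\overline{v_k}=0$ automatically, since the $k$-th summand lies in a product of two orthogonal ideals. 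Thus the Hermitian dual of $\phi(C)$ inside $\mathcal{R}_m^s$ decomposes componentwise: $(\phi(C))^{\perp_H}=(\bigoplus_{i=0}^r C_i^{\perp_H})\oplus(\bigoplus_{j=1}^t((C_j'')^{\perp_H}\oplus(C_j')^{\perp_H}))$, where on the pair $(H_j',H_j'')$ the conjugation interchanges the two factors, so the dual of $C_j'$ ``lives in'' the $H_j''$-slot and vice versa.

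First I would make precise the relationship between the Euclidean self-duality of $C$ over $\F_q$ and the Hermitian self-duality of $\phi(C)$ over $\mathcal{R}_m$. This is exactly the content of \cite[Theorem 4.2]{LSole}: $\phi$ is an isometry in the sense that $C^{\perp_E}$ (Euclidean dual in $\F_q^{ms}$) corresponds under $\phi$ to $(\phi(C))^{\perp_H}$ (Hermitian dual in $\mathcal{R}_m^s$). Hence $C$ is Euclidean self-dual iff $\phi(C)=(\phi(C))^{\perp_H}$. I would then simply intersect this identity with each constituent block. On the $G_0$-block we get $C_0=C_0^{\perp_H}$; since the Hermitian form restricts to the Euclidean form on $G_0\cong\F_q$ (noted right after \eqref{Hermitian}), and $\phi^{-1}(C_0)=F_\sigma(C)$ with $\pi$ identifying $F_\sigma(C)$ with $C_0$, this says precisely that $C_\pi=C_0$ is Euclidean self-dual. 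On each $G_i$-block for $i\geq 1$ we get $C_i=C_i^{\perp_H}$, i.e.\ $C_i$ is Hermitian self-dual over the field $\F_q[x]/(g_i)$ (here conjugation is a genuine nontrivial field automorphism because $g_i$ is self-reciprocal of degree $>1$, so the ``Hermitian'' terminology is justified). On each conjugate pair $(H_j',H_j'')$ we get the pair of conditions $C_j'=(C_j'')^{\perp}$ and $C_j''=(C_j')^{\perp}$, which are equivalent to the single stated condition $C_j''=(C_j')^\perp$.

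For the converse, I would run the same argument backwards: given that $C_0$ is Euclidean self-dual, the $C_i$ are Hermitian self-dual, and $C_j''=(C_j')^\perp$ for all $j$, reassemble these into the block decomposition and conclude $\phi(C)=(\phi(C))^{\perp_H}$ using the orthogonality of distinct blocks, hence $C=C^{\perp_E}$. One should double-check the dimension bookkeeping so that ``self-dual'' is not vacuous: $\dim_{\F_q}C=\dim C_0\cdot 1+\sum_{i\geq1}\dim C_i\cdot\deg g_i+\sum_j(\dim C_j'+\dim C_j'')\deg h_j$, and the stated conditions force each summand to be exactly half of the corresponding $s\cdot(\text{field degree})$, summing to $ms/2=n/2$; this is automatic from the constituent-wise self-duality/duality relations, so no separate argument is needed.

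The main obstacle, such as it is, is purely one of careful translation: verifying that the Hermitian form \eqref{Hermitian} on $\mathcal{R}_m^s$ really does restrict to the Euclidean form on the $G_0$-slot (so that clause (1) comes out Euclidean, not Hermitian), and that on the two $H_j$-slots the conjugation-swap turns a single self-duality condition into the paired condition $C_j''=(C_j')^\perp$ rather than two independent ones. Both points are exactly the remarks the author has placed immediately after \eqref{Hermitian}, so the proof is essentially a matter of citing \cite[Theorem 4.2]{LSole}, invoking the orthogonal CRT-type decomposition of $\mathcal{R}_m$, and reading off the constituent conditions; I do not anticipate any genuinely hard step.
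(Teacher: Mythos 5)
Your proposal is correct and follows essentially the same route as the paper, which derives this theorem directly from \cite[Theorem 4.2]{LSole}; you have simply unpacked the argument behind that citation (the CRT decomposition of $\mathcal{R}_m$, the orthogonality of distinct constituent blocks under the Hermitian form \eqref{Hermitian}, conjugation fixing the $G_i$ and swapping $H_j'$ with $H_j''$, and the equivalence of the paired conditions $C_j''=(C_j')^\perp$ and $C_j'=(C_j'')^\perp$). The only minor imprecision is that the isometry statement ``$\phi$ carries the Euclidean dual of $C$ to the Hermitian dual of $\phi(C)$'' is established in Ling--Sol\'e before their Theorem 4.2 rather than being its content, but this does not affect the correctness of the argument.
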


This theorem gives us the following corollaries.

\begin{corollary}\label{cor:QC-SO}
The linear code $C$ is Euclidean self-orthogonal $q$-ary code if and only if
$C_{\pi}=C_0$ is Euclidean self-orthogonal, $C_i$ for $i=1,\ldots,r$ are Hermitian self-orthogonal codes, and $C_j''\subseteq(C_j')^\perp$ for $j=1,\ldots,t$ with respect to the Euclidean inner product.
\end{corollary}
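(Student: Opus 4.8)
The plan is to derive Corollary~\ref{cor:QC-SO} directly from the self-dual characterization theorem it follows, by weakening the three equality conditions to the corresponding containment conditions. The starting observation is that $C$ is Euclidean self-orthogonal if and only if $C\subseteq C^\perp$, and under the decomposition $\phi(C)=(\bigoplus_{i=0}^r C_i)\oplus(\bigoplus_{j=1}^t(C_j'\oplus C_j''))$ the dual splits componentwise: by the theorem's proof (equivalently by \cite[Theorem 4.2]{LSole}), the Hermitian dual of $\phi(C)$ in $\mathcal{R}_m^s$ is $\phi(C^\perp)$, and its constituents are $C_0^\perp$ in $G_0$ (Euclidean dual, since the form is Euclidean on $G_0\cong\F_q$), $C_i^{\perp_H}$ in $G_i$ for $1\le i\le r$, and the pair $(C_j''^{\perp},C_j'^{\perp})$ in $H_j'\oplus H_j''$ for $1\le j\le t$ (the swap coming from $\overline{v_i}\in H_j''$ when $v_i\in H_j'$). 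Thus $C\subseteq C^\perp$ is equivalent to the conjunction of $C_0\subseteq C_0^\perp$, $C_i\subseteq C_i^{\perp_H}$ for all $i$, and $C_j'\subseteq C_j''^{\perp}$ together with $C_j''\subseteq C_j'^{\perp}$ for all $j$.

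The main remaining point is to observe that the two conditions $C_j'\subseteq C_j''^{\perp}$ and $C_j''\subseteq C_j'^{\perp}$ are in fact a single condition, because for linear codes over a field orthogonality is symmetric: $C_j'\subseteq C_j''^{\perp}$ holds if and only if $(u,v)=0$ for all $u\in C_j'$, $v\in C_j''$, which is manifestly the same as $C_j''\subseteq C_j'^{\perp}$. Here one has to be slightly careful, since the pairing between $H_j'$ and $H_j''$ is the Hermitian-type form $\overline{v_i}=v_i(x^{-1})$, but this form restricted to the pair $H_j'\times H_j''$ is a nondegenerate bilinear (over $\F_q$) pairing identifying each of $H_j'$, $H_j''$ with the dual of the other, so the symmetry of annihilators still holds. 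This collapses the two containments to the stated $C_j''\subseteq(C_j')^\perp$.

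Combining, $C$ Euclidean self-orthogonal $\iff$ $C_0$ Euclidean self-orthogonal, $C_i$ Hermitian self-orthogonal for $i=1,\ldots,r$, and $C_j''\subseteq(C_j')^\perp$ for $j=1,\ldots,t$; and since $C_0=C_\pi$ by the identification $\phi^{-1}(C_0)=F_\sigma(C)$ and the definition of $C_\pi=\pi(F_\sigma(C))$ made in Corollary~\ref{thm:FsigQS}, this is exactly the claimed equivalence.

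I expect the only genuine obstacle to be bookkeeping rather than mathematics: one must make sure the decomposition of the \emph{dual} code is correctly inherited from \cite[Theorem 4.2]{LSole} — in particular that the Euclidean dual $C^\perp$ of the $q$-ary code corresponds under $\phi$ to the Hermitian dual in $\mathcal{R}_m^s$ of $\phi(C)$, which is precisely the content the cited theorem uses — and that the index shuffling $H_j'\leftrightarrow H_j''$ under conjugation is tracked so that the ``$C_j''=(C_j')^\perp$'' of the self-dual theorem becomes ``$C_j''\subseteq(C_j')^\perp$'' and not the reverse containment. Once these identifications are in place, replacing ``$=$'' by ``$\subseteq$'' throughout and invoking symmetry of orthogonality to merge the $j$-conditions finishes the proof in a few lines.
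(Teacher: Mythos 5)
Your proposal is correct and follows the route the paper intends: the paper gives no written proof of this corollary, merely asserting that it follows from the self-dual characterization (i.e.\ from the componentwise description of the Hermitian dual of $\phi(C)$ in \cite[Theorem 4.2]{LSole}), and your argument supplies exactly that derivation, including the correct handling of the $H_j'\leftrightarrow H_j''$ swap under conjugation and the observation that the two containments $C_j'\subseteq(C_j'')^\perp$ and $C_j''\subseteq(C_j')^\perp$ coincide by symmetry of the pairing. No gaps.
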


\begin{corollary}\label{cor:QC-LCD}
The linear code $C$ is Euclidean LCD $q$-ary code if and only if
$C_{\pi}=C_0$ is Euclidean LCD code, $C_i$ for $i=1,\ldots,r$ are Hermitian LCD codes, $C_j''\cap (C_j')^\perp=\{\mathbf{0}\}$ and $C_j'\cap (C_j'')^\perp=\{\mathbf{0}\}$ for $j=1,\ldots,t$ with respect to the Euclidean inner product.
\end{corollary}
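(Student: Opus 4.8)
The plan is to imitate the proof of the Euclidean self-dual characterization above (and of Corollary~\ref{cor:QC-SO}), replacing the containment conditions by trivial-intersection conditions. First recall that a linear code $D$ is LCD precisely when $D\cap D^{\perp}=\{\mathbf{0}\}$. Since $\phi$ is an $\F_q$-linear bijection and, by the correspondence underlying \cite[Theorem 4.2]{LSole}, it carries the Euclidean dual of a quasi-cyclic code $C$ to the Hermitian dual (in the sense of \eqref{Hermitian}) of $\phi(C)$, we get $\phi(C\cap C^{\perp})=\phi(C)\cap\phi(C)^{\perp_H}$, where $\perp_H$ denotes the Hermitian dual in $\mathcal{R}_m^s$. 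Hence $C$ is Euclidean LCD if and only if $\phi(C)\cap\phi(C)^{\perp_H}=\{\mathbf{0}\}$.

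Next I would read off $\phi(C)^{\perp_H}$ from the decomposition $\mathcal{R}_m^s=(\bigoplus_{i=0}^r G_i^s)\oplus(\bigoplus_{j=1}^t(H_j'^s\oplus H_j''^s))$. The form $(u,v)\mapsto\sum_k u_k\overline{v_k}$ vanishes identically on $G_i^s\times G_k^s$ for $i\ne k$, on $G_i^s\times H_j'^s$ and $G_i^s\times H_j''^s$, on $H_j'^s\times H_k'^s$ and $H_j'^s\times H_k''^s$ for $j\ne k$, and also on $H_j'^s\times H_j'^s$ and on $H_j''^s\times H_j''^s$ (because distinct minimal ideals of $\mathcal{R}_m$ annihilate one another and $\overline{H_j'}=H_j''$); the only blocks carrying a perfect pairing are $G_i^s$ with itself (Euclidean when $i=0$ since $G_0\cong\F_q$, Hermitian when $i\ge1$) and $H_j'^s$ with $H_j''^s$. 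Writing $\phi(C)=(\bigoplus_{i=0}^r C_i)\oplus(\bigoplus_{j=1}^t(C_j'\oplus C_j''))$, this yields
$$\phi(C)^{\perp_H}=C_0^{\perp_E}\oplus\Big(\bigoplus_{i=1}^r C_i^{\perp_H}\Big)\oplus\Big(\bigoplus_{j=1}^t\big((C_j'')^{\perp_E}\oplus(C_j')^{\perp_E}\big)\Big),$$
where the summand $(C_j'')^{\perp_E}$ occupies the $H_j'^s$-slot and $(C_j')^{\perp_E}$ the $H_j''^s$-slot (consistently with the self-dual condition $C_j''=(C_j')^{\perp}$ and the self-orthogonal condition $C_j''\subseteq(C_j')^{\perp}$). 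This is exactly the information packaged in \cite[Theorem 4.2]{LSole}, so it may be quoted rather than re-derived.

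Finally, the submodules $G_i^s$, $H_j'^s$, $H_j''^s$ are distinct direct summands of $\mathcal{R}_m^s$, so intersection distributes over the decomposition: if $x\in\phi(C)\cap\phi(C)^{\perp_H}$, then in each slot its component lies simultaneously in the corresponding component of $\phi(C)$ and of $\phi(C)^{\perp_H}$. Therefore
$$\phi(C)\cap\phi(C)^{\perp_H}=\big(C_0\cap C_0^{\perp_E}\big)\oplus\Big(\bigoplus_{i=1}^r\big(C_i\cap C_i^{\perp_H}\big)\Big)\oplus\Big(\bigoplus_{j=1}^t\big((C_j'\cap(C_j'')^{\perp_E})\oplus(C_j''\cap(C_j')^{\perp_E})\big)\Big),$$
and a direct sum is trivial iff each summand is trivial. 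Thus $C$ is LCD if and only if $C_0=C_\pi$ is Euclidean LCD, each $C_i$ with $1\le i\le r$ is Hermitian LCD, and $C_j'\cap(C_j'')^{\perp}=\{\mathbf{0}\}$ and $C_j''\cap(C_j')^{\perp}=\{\mathbf{0}\}$ for $j=1,\ldots,t$, which is the assertion.

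The only non-routine ingredients are the $\phi$-compatibility of the Euclidean and Hermitian duals and the resulting block form of $\phi(C)^{\perp_H}$; both are already available from \cite[Theorem 4.2]{LSole}, so the remaining work is the elementary equivalence ``LCD $\Leftrightarrow$ zero self-intersection'' together with ``intersection commutes with a direct sum of submodules.'' The one place that needs a little care is matching slots correctly, i.e.\ checking that $C_j'$ pairs with $(C_j'')^{\perp}$ rather than with $(C_j')^{\perp}$; keeping track of $\overline{H_j'}=H_j''$ makes this automatic.
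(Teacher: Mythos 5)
Your proposal is correct, and it follows exactly the route the paper intends: the paper states this corollary without proof, deriving it (like Corollary~\ref{cor:QC-SO}) from the block decomposition of the Hermitian dual of $\phi(C)$ underlying \cite[Theorem 4.2]{LSole}, which is precisely the decomposition you write down and then intersect slot by slot. Your write-up in fact supplies the details the paper omits — in particular the observation that intersection distributes over the direct-sum decomposition and the careful matching of $C_j'$ against $(C_j'')^{\perp}$ — and these are exactly the right details.
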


\subsection{Almost quasi-cyclic codes}

Let $C$ be a linear code with a permutation automorphism $\sigma\in \textrm{Sym}(n)$ of order $m$ (not necessarily prime) with $c$ cycles of length $m$ and $f$ fixed points. We call such code \textit{almost quasi-cyclic}. In this case, we say that $\sigma$ is of type $m$-$(c,f)$.  Without loss of generality we can assume that
\begin{equation}\label{forma-auto}
\sigma =\Omega_1\dots\Omega_c\Omega_{c+1}\dots\Omega_{c+f}
\end{equation}
where $\Omega_{i}=((i-1)m+1,\dots,im), i=1,\dots,c$, are the
cycles of length $r$, and $\Omega_{c+i}=(cm+i), i=1,\dots,f$, are
the fixed points. Obviously, $cm+f=n$. Almost quasi-cyclic codes are a special case of generalized quasi-cyclic codes, but we consider them separately because the decomposition given above is not very useful, since the constituents that correspond to the fixed points are codes of length 1. Therefore, in this subsection we focus on the subcodes $\Fsig$ and $\Esig$ in more detail here. Theorem \ref{thm:Fsig} gives us the following statement.

\begin{corollary}\label{cor:FsigQS}
Let $m\equiv 1\pmod{p}$. Then:
  \begin{itemize}
  \item[(1)] if $C$ is self-orthogonal so is $C_{\pi}$;
   \item[(2)] if $C$ is self-dual so is $C_{\pi}$;
    \item[(3)] if $C$ is LCD so is $C_{\pi}$.
  \end{itemize}
\end{corollary}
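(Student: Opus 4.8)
The plan is to derive Corollary~\ref{cor:FsigQS} directly from Theorem~\ref{thm:Fsig} by checking that the hypotheses of the latter are met in the almost quasi-cyclic setting. Recall that in this subsection $\sigma$ is of type $m$-$(c,f)$, so in the decomposition \eqref{forma-auto} the cycle lengths are $l_1=\cdots=l_c=m$ and $l_{c+1}=\cdots=l_{c+f}=1$. The single standing assumption is $m\equiv 1\pmod p$, and trivially $1\equiv 1\pmod p$, so every $l_i$ is congruent to the common value $l=1$ modulo $p$, and $l=1\not\equiv 0\pmod p$. Hence the hypothesis $l_1\equiv\cdots\equiv l_s\equiv l\not\equiv 0\pmod p$ of Theorem~\ref{thm:Fsig} holds with $s=c+f$ and $l=1$.

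Given that, the three claims are immediate: applying Theorem~\ref{thm:Fsig}(1), (2), (3) respectively yields that if $C$ is self-orthogonal (resp.\ self-dual, resp.\ LCD), then so is the projection code $C_\pi=\pi(F_\sigma(C))$. Since $C_\pi$ is defined exactly as in Section~\ref{Preliminaries} via the map $\pi:F_\sigma(C)\to\F_q^{s}$, no further identification is needed and the statement follows verbatim.

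I would keep the write-up to one or two sentences, essentially: \emph{The hypotheses of Theorem~\ref{thm:Fsig} are satisfied with $l=1$ (since a fixed point is a cycle of length $1$ and $m\equiv 1\equiv 1\pmod p$), so the conclusions (1)--(3) are precisely Theorem~\ref{thm:Fsig}(1)--(3).} There is essentially no obstacle here; the only point worth a moment's care is confirming that $1\not\equiv 0\pmod p$ holds for every prime $p$ (it does), so that the ``$l\not\equiv 0$'' clause is not vacuously violated, and that mixing cycle lengths $m$ and $1$ is permissible — it is, precisely because the condition in Theorem~\ref{thm:Fsig} only requires all $l_i$ to agree modulo $p$, which forces $m\equiv 1\pmod p$, exactly the assumed hypothesis.
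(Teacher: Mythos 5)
Your proposal is correct and matches the paper's intent exactly: the paper derives Corollary~\ref{cor:FsigQS} as an immediate consequence of Theorem~\ref{thm:Fsig}, and your verification that the cycle lengths $m,\dots,m,1,\dots,1$ all satisfy $l_i\equiv 1\not\equiv 0\pmod p$ under the hypothesis $m\equiv 1\pmod p$ is precisely the (implicit) justification the paper relies on.
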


If $v\in\Esig$ then $v\vert_{\Omega_j}=0$ for $j=c+1,\ldots,c+f$. Denote by $E_{\sigma}(C)^{*}$ the code obtained from $E_{\sigma}(C)$ by
deleting the last $f$ coordinates. Since $E_{\sigma}(C)^*$ is a quasi-cyclic $q$-ary code of length $cm$ and index $c$, we can use the decomposition given in the previous subsection. All codewords of $C$ that $\sigma$ preserves belong to the subcode $\Fsig$, therefore for the code $C_{\phi}=\phi(E_{\sigma}(C)^{*})$ we have
$$C_{\phi}=(\bigoplus_{i=1}^r C_i)\oplus (\bigoplus_{i=1}^t (C_j'\oplus C_j''),$$
where $C_i$ is a linear code over $G_i$, $i=1,\ldots,r$, $C_j'$ and $C_j''$ are linear codes over $H_j'$ and $H_j''$, respectively, $j=1,\ldots,t$, all of length $c$. Theorem \ref{codecomp} gives us the following corollary.

\begin{corollary}
The code $C$ having an automorphism $\sigma$ given in \eqref{forma-auto} is self-orthogonal (resp. LCD) code if and only if $\Fsig$ and $\Esig^*$ are self-orthogonal (resp. LCD).
\end{corollary}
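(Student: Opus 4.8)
The plan is to reduce the statement entirely to Theorem~\ref{codecomp} together with the fact that the map $\phi$ (restricted to $E_\sigma(C)^*$) is a coordinate-preserving linear isomorphism, so that self-orthogonality and the LCD property transfer between $E_\sigma(C)$ and $E_\sigma(C)^*$ verbatim. First I would record the orthogonal decomposition $C=F_\sigma(C)\oplus E_\sigma(C)$ from Theorem~\ref{codecomp}, valid here because $\gcd(m,q)=1$ in this subsection; the point is that the two summands are mutually orthogonal with respect to the chosen (Euclidean or Hermitian) inner product. Next I would observe that deleting the last $f$ coordinates is a bijection between $E_\sigma(C)$ and $E_\sigma(C)^*$ that preserves the inner product, since every $v\in E_\sigma(C)$ is identically zero on the fixed-point block $\Omega_{c+1},\dots,\Omega_{c+f}$; hence $E_\sigma(C)$ is self-orthogonal (resp.\ LCD) if and only if $E_\sigma(C)^*$ is.

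The main work is then the elementary ``orthogonal direct sum'' bookkeeping. For the self-orthogonal direction: if $C$ is self-orthogonal, then both summands $F_\sigma(C)$ and $E_\sigma(C)$ are self-orthogonal as subcodes of a self-orthogonal code, giving one implication; conversely, if $F_\sigma(C)$ and $E_\sigma(C)$ are each self-orthogonal and, by Theorem~\ref{codecomp}, mutually orthogonal, then for $v=v_1+v_2$, $w=w_1+w_2$ with $v_i,w_i$ in the respective summands one expands $(v,w)=(v_1,w_1)+(v_1,w_2)+(v_2,w_1)+(v_2,w_2)=0$, so $C\subseteq C^\perp$. For the LCD direction I would argue on dimensions and intersections: from $C=F_\sigma(C)\oplus E_\sigma(C)$ with orthogonal summands one gets $C^\perp=F_\sigma(C)^{\perp_F}\oplus E_\sigma(C)^{\perp_E}$ in a compatible way (orthogonal complements taken inside the respective ambient subspaces, plus the part orthogonal to all of $C$), and then $C\cap C^\perp=\{\mathbf 0\}$ forces and is forced by $F_\sigma(C)\cap F_\sigma(C)^\perp=\{\mathbf 0\}$ and $E_\sigma(C)\cap E_\sigma(C)^\perp=\{\mathbf 0\}$; the cross terms cannot contribute a nonzero element because the summands are orthogonal and independent. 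Here one must be a little careful to phrase ``LCD'' consistently, i.e.\ in terms of the nondegeneracy of the form restricted to the code.

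The step I expect to be the genuine obstacle — or at least the one requiring the most care — is making precise the interaction between the ambient inner product on $\F_q^n$, its restriction to the $\sigma$-invariant subspace spanned by $F_\sigma(C)$ and $E_\sigma(C)$, and the induced forms on $E_\sigma(C)^*$ after puncturing. One has to check that puncturing the $f$ fixed coordinates does not alter any inner product value among vectors of $E_\sigma(C)$ (true, since those coordinates vanish) and, for the LCD claim, that the dual of the punctured code $E_\sigma(C)^*$ inside $\F_q^{cm}$ really corresponds to $E_\sigma(C)^\perp$ projected appropriately — this is where a clean statement ``$E_\sigma(C)$ is LCD in $\F_q^n$ iff $E_\sigma(C)^*$ is LCD in $\F_q^{cm}$'' needs the observation that an LCD code and its puncturing on identically-zero coordinates have the same Gram matrix rank. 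Once that is in hand, the proof is a two-line assembly: apply Theorem~\ref{codecomp}, invoke the orthogonality of the summands, and note the puncturing isometry, so no further computation is needed.

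\begin{proof}
By Theorem~\ref{codecomp}, since $\gcd(m,q)=1$ we have the decomposition $C=F_\sigma(C)\oplus E_\sigma(C)$ with the two summands mutually orthogonal with respect to the considered inner product. Every $v\in E_\sigma(C)$ satisfies $v\vert_{\Omega_j}=\mathbf 0$ for $j=c+1,\dots,c+f$, so deleting the last $f$ coordinates is a linear isomorphism $E_\sigma(C)\to E_\sigma(C)^*$ that preserves all inner product values; hence $E_\sigma(C)$ is self-orthogonal (resp.\ LCD) if and only if $E_\sigma(C)^*$ is.

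Suppose $C$ is self-orthogonal. Then $F_\sigma(C)\subseteq F_\sigma(C)^\perp$ and $E_\sigma(C)\subseteq E_\sigma(C)^\perp$ as subcodes of a self-orthogonal code, so $F_\sigma(C)$ and $E_\sigma(C)^*$ are self-orthogonal. Conversely, assume $F_\sigma(C)$ and $E_\sigma(C)^*$, equivalently $F_\sigma(C)$ and $E_\sigma(C)$, are self-orthogonal. For $v=v_1+v_2$ and $w=w_1+w_2$ with $v_1,w_1\in F_\sigma(C)$ and $v_2,w_2\in E_\sigma(C)$,
\[
(v,w)=(v_1,w_1)+(v_1,w_2)+(v_2,w_1)+(v_2,w_2)=0,
\]
because $(v_1,w_1)=(v_2,w_2)=0$ by self-orthogonality of the summands and $(v_1,w_2)=(v_2,w_1)=0$ by Theorem~\ref{codecomp}. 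Hence $C\subseteq C^\perp$ and $C$ is self-orthogonal.

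Now suppose $C$ is LCD, i.e.\ $C\cap C^\perp=\{\mathbf 0\}$, equivalently the form is nondegenerate on $C$. Its restriction to each of the orthogonal summands $F_\sigma(C)$ and $E_\sigma(C)$ is then nondegenerate as well, since a nonzero vector in $F_\sigma(C)\cap F_\sigma(C)^\perp$ would, being orthogonal to $E_\sigma(C)$ too, lie in $C\cap C^\perp$, and similarly for $E_\sigma(C)$. Thus $F_\sigma(C)\cap F_\sigma(C)^\perp=\{\mathbf 0\}$ and $E_\sigma(C)\cap E_\sigma(C)^\perp=\{\mathbf 0\}$, so $F_\sigma(C)$ and $E_\sigma(C)^*$ are LCD. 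Conversely, assume $F_\sigma(C)$ and $E_\sigma(C)^*$, equivalently $F_\sigma(C)$ and $E_\sigma(C)$, are LCD. Let $v=v_1+v_2\in C\cap C^\perp$ with $v_1\in F_\sigma(C)$, $v_2\in E_\sigma(C)$. For every $w_1\in F_\sigma(C)$ we get $0=(v,w_1)=(v_1,w_1)+(v_2,w_1)=(v_1,w_1)$ by Theorem~\ref{codecomp}, so $v_1\in F_\sigma(C)\cap F_\sigma(C)^\perp=\{\mathbf 0\}$; likewise testing against all $w_2\in E_\sigma(C)$ gives $v_2=\mathbf 0$. Hence $v=\mathbf 0$, so $C$ is LCD.
\end{proof}
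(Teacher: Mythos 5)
Your proof is correct and follows essentially the same route as the paper: decompose $C=F_\sigma(C)\oplus E_\sigma(C)$ via Theorem~\ref{codecomp}, note that puncturing the identically-zero fixed coordinates is an isometry between $E_\sigma(C)$ and $E_\sigma(C)^*$, and then run the standard orthogonal-direct-sum argument for both the self-orthogonal and LCD equivalences. The paper states the LCD step as a small stand-alone lemma ($C=C_1\oplus C_2$ with $C_1\perp C_2$ is LCD iff both summands are), but the content of your argument is identical.
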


\begin{proof}
Obviously, $E_{\sigma}(C)$ is an orthogonal (resp. LCD) code if and only if the code $E_{\sigma}(C)^*$ is self-orthogonal (resp. LCD).

If $C$ is a self-orthogonal code, all its subcodes are also self-orthogonal. Conversely, if $\Fsig$ and $\Esig^*$ are self-orthogonal codes then $\Esig$ is also self-orthogonal, and since $\Fsig\perp\Esig$, the code $C=\Fsig\oplus\Esig$ is self-orthogonal.

In the case of LCD codes, we will prove that if $C=C_1\oplus C_2$ and $C_1\perp C_2$, then $C$ is an LCD code if and only if both $C_1$ and $C_2$ are LCD codes.

$\Rightarrow)$ Let $C$ be an LCD code. If $w=(w_1,\ldots,w_{n})\in C_1\cap C_1^\perp$ then $w\perp C_1$ and $w\perp C_2$. This gives us that $w\perp C$ and so $w\in C\cap C^\perp$. Hence $w=0$ and $C_1$ is an LCD code. The same holds for the code $C_2$.

$\Leftarrow)$ Let $C_1$ and $C_2$ be LCD codes, and $x\in C\cap C^\perp$. Since $C= C_1\oplus C_2$ then $x=x_1+x_2$, $x_i\in C_i$, $i=1,2$. Take $y_i\in C_i$, $i=1,2$. Then we have $x\cdot y_i=0$ and
$$x_i\cdot y_i=(x_1+x_2)\cdot y_i=x\cdot y_i=0 \ \Rightarrow x_i\perp C_i \ \Rightarrow x_i\in C_i\cap C_i^\perp \ \Rightarrow x_i=0, \
i=1,2.$$
This proves that $x=0$ and so $C$ is also an LCD code.

To complete the proof, we take $C_1=E_\sigma(C)$ and $C_2=F_\sigma(C)$.
\end{proof}

Combining the corollary with Corollary \ref{cor:QC-SO} and Corollary \ref{cor:QC-LCD}, we prove the following.

\begin{corollary}\label{cor:AQC-SO}
The linear code $C$ is Euclidean self-orthogonal $q$-ary code if and only if
$C_{\pi}$ is Euclidean self-orthogonal, $C_i$ for $i=1,\ldots,r$ are Hermitian self-orthogonal codes, and $C_j''\subseteq(C_j')^\perp$ for $j=1,\ldots,t$ with respect to the Euclidean inner product.
\end{corollary}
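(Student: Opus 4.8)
The plan is to combine the two results the corollary explicitly invokes. First I would recall that by the previous corollary, $C$ is Euclidean self-orthogonal if and only if both $\Fsig$ and $\Esig^*$ are self-orthogonal. So the task reduces to re-expressing each of these two conditions in terms of the constituent codes $C_\pi$, $C_i$ ($i=1,\ldots,r$), and the pairs $C_j'$, $C_j''$ ($j=1,\ldots,t$).

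Next I would handle the two pieces separately. For the fixed subcode $\Fsig$: since $m\equiv 1\pmod p$ (this is the relevant instance of the hypothesis $l_1\equiv\cdots\equiv l_s\equiv l\not\equiv 0\pmod p$ with $l=m$ for the $c$ cycles, and trivially $l=1$ for the $f$ fixed points), the computation $(v,w)=l(\pi(v),\pi(w))$ from the proof of Theorem~\ref{thm:Fsig} shows that $\Fsig$ is self-orthogonal if and only if $C_\pi=\pi(\Fsig)$ is Euclidean self-orthogonal. For $\Esig^*$: this is a genuine quasi-cyclic code of length $cm$ and index $c$ with $\gcd(m,q)=1$, so Corollary~\ref{cor:QC-SO} applies directly and tells us that $\Esig^*$ is Euclidean self-orthogonal if and only if its constituents $C_i$ ($i=1,\ldots,r$) are Hermitian self-orthogonal and $C_j''\subseteq (C_j')^\perp$ for $j=1,\ldots,t$ with respect to the Euclidean inner product. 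Here one should note that $\phi(\Esig^*)=C_\phi=(\bigoplus_{i=1}^r C_i)\oplus(\bigoplus_{i=1}^t (C_j'\oplus C_j''))$, exactly as set up just before the statement, so the constituents appearing in Corollary~\ref{cor:QC-SO} are the same $C_i, C_j', C_j''$ named in the present corollary. Assembling the two equivalences gives the claim.

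One small point I would be careful about is the orthogonality of $\Fsig$ and $\Esig$: the previous corollary already records that $C=\Fsig\oplus\Esig$ with $\Fsig\perp\Esig$ (this is Theorem~\ref{codecomp}), so no cross terms between the fixed part and the $E$-part need to be tracked --- self-orthogonality of $C$ splits cleanly into self-orthogonality of the two summands. Likewise the passage from $\Esig$ to $\Esig^*$ is harmless because codewords of $\Esig$ vanish on the $f$ fixed coordinates, so deleting them preserves all inner products. I do not expect any serious obstacle here; the only thing to verify with a little care is that the $l$ in $(v,w)=l(\pi(v),\pi(w))$ is a nonzero scalar, which is precisely the content of $m\not\equiv 0\pmod p$, i.e. $m\equiv 1\pmod p$ in the hypothesis --- and of course the fixed points contribute blocks of length $1$, which are trivially nonzero mod $p$. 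With that observed, the proof is just the conjunction of Corollary (the immediately preceding one) and Corollary~\ref{cor:QC-SO}.
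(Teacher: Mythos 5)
Your proposal is correct and is essentially the paper's own argument: the paper proves this corollary in one line by combining the immediately preceding corollary (reducing self-orthogonality of $C$ to that of $F_{\sigma}(C)$ and $E_{\sigma}(C)^*$) with Corollary~\ref{cor:QC-SO} applied to the quasi-cyclic code $E_{\sigma}(C)^*$. You are in fact more careful than the paper in flagging that the translation of ``$F_\sigma(C)$ self-orthogonal'' into ``$C_\pi$ Euclidean self-orthogonal'' needs $m\equiv 1\pmod p$ (the hypothesis of Corollary~\ref{cor:FsigQS}, left implicit in the statement here but shown to be essential by the paper's third example).
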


\begin{corollary}\label{cor:AQC-LCD}
The linear code $C$ is Euclidean LCD $q$-ary code if and only if
$C_{\pi}$ is Euclidean LCD code, $C_i$ for $i=1,\ldots,r$ are Hermitian LCD codes, $C_j''\cap (C_j')^\perp=\{\mathbf{0}\}$ and $C_j'\cap (C_j'')^\perp=\{\mathbf{0}\}$ for $j=1,\ldots,t$ with respect to the Euclidean inner product.
\end{corollary}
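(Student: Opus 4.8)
The plan is to carry out the reduction announced just before the statement, treating the two summands of $C$ separately. First I would apply Theorem~\ref{codecomp}: since $\gcd(m,q)=1$, $C=\Fsig\oplus\Esig$ with $\Fsig\perp\Esig$. Combined with the corollary established just above --- that $C$ is Euclidean LCD if and only if both $\Fsig$ and $\Esig^{*}$ are, using that the summands are orthogonal and that deleting the $f$ zero coordinates of $\Esig$ preserves the LCD property --- this splits the task into analysing the fixed part through its projection $C_{\pi}$ and analysing $\Esig^{*}$ through the quasi-cyclic decomposition of the previous subsection.

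For the fixed part I would note that $\pi$ restricts to a linear isomorphism from $\Fsig$ onto $C_{\pi}$, and that for $v,w\in\Fsig$ the computation in the proof of Theorem~\ref{thm:Fsig} gives $(v,w)=\sum_{j=1}^{c+f}|\Omega_{j}|\,\pi(v)_{j}\,\pi(w)_{j}$. Under the hypothesis $m\equiv 1\pmod p$ of Corollary~\ref{cor:FsigQS} (in force here) every cycle length is $\equiv 1\pmod p$, so this equals $(\pi(v),\pi(w))$; hence $\pi$ restricted to $\Fsig$ is an isometry and maps $\Fsig\cap\Fsig^{\perp}$ bijectively onto $C_{\pi}\cap C_{\pi}^{\perp}$, so $\Fsig$ is LCD if and only if $C_{\pi}$ is LCD.

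For $\Esig^{*}$ I would observe that it is a $q$-ary quasi-cyclic code of length $cm$ and index $c$ with $\gcd(m,q)=1$, so Corollary~\ref{cor:QC-LCD} applies directly. Its fixed subcode under the cyclic shift is trivial: a codeword constant with value $\alpha_{j}$ on the $j$-th length-$m$ cycle and with all cycle sums zero satisfies $m\alpha_{j}=0$, hence $\alpha_{j}=0$ since $m\neq 0$ in $\F_{q}$; equivalently the $G_{0}$-constituent of $\phi(\Esig^{*})$ vanishes, which is why the decomposition recorded above reads $C_{\phi}=(\bigoplus_{i=1}^{r}C_{i})\oplus(\bigoplus_{j=1}^{t}(C_{j}'\oplus C_{j}''))$. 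Thus the clause ``$C_{0}$ is LCD'' in Corollary~\ref{cor:QC-LCD} holds vacuously, and $\Esig^{*}$ is Euclidean LCD if and only if $C_{i}$ is Hermitian LCD for $i=1,\dots,r$ and $C_{j}''\cap(C_{j}')^{\perp}=\{\mathbf 0\}=C_{j}'\cap(C_{j}'')^{\perp}$ for $j=1,\dots,t$. Combining the three reductions yields the claimed equivalence, and the same argument with Corollary~\ref{cor:QC-SO} in place of Corollary~\ref{cor:QC-LCD} re-proves Corollary~\ref{cor:AQC-SO}.

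I expect no analytic difficulty; the only point requiring care is the bookkeeping at the interface of the two decompositions. One must keep $C_{\pi}$ --- the projection of the \emph{fixed} subcode $\Fsig$, and the one place where $m\equiv 1\pmod p$ is genuinely used --- separate from the (trivial) $G_{0}$-part of the quasi-cyclic decomposition of $\Esig^{*}$, and one must verify that the constituents $C_{i},C_{j}',C_{j}''$ produced by applying Corollary~\ref{cor:QC-LCD} to $\Esig^{*}$ are exactly those named in the statement, which they are by the displayed formula for $C_{\phi}$. Once these identifications are fixed, every implication is immediate.
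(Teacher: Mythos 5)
Your argument is correct and is essentially the proof the paper intends: it combines the unnumbered corollary giving ``$C$ is LCD iff $\Fsig$ and $\Esig^{*}$ are'' with Corollary~\ref{cor:QC-LCD} applied to the quasi-cyclic code $\Esig^{*}$, whose $G_{0}$-constituent you rightly observe is zero. Your explicit treatment of the remaining step --- that $\Fsig$ is LCD iff $C_{\pi}$ is, which genuinely requires $m\equiv 1\pmod{p}$ as in Corollary~\ref{cor:FsigQS} --- is a hypothesis the stated corollary leaves implicit, and the paper's own third example ($m=5$, $q=3$) confirms it cannot be dropped.
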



Self-dual almost quasi-cyclic codes was studied by Huffman in \cite{Huffman-decomposing}. Methods for construction and classification of self-dual codes with an automorphism of prime order $m\neq p$ were given in \cite{Huffman48,Yorgov56} for binary codes, \cite{Huffman3} for ternary codes, \cite{Huffman4I,Huffman4II} for Hermitian quaternary codes. More detailed list with references on linear codes with automorphism of prime order can be seen in \cite{Huffman2013}. Binary LCD codes having an automorphism of odd prime order are studied in detail in \cite {StefkaJavier}. Some classes of quasi-cyclic codes with complementary duals are examined in \cite{LCD_Yari}.

\section{Permutation automorphisms of order $m$ not coprime with the characteristic of the field}
\label{sect:perm-notcoprime}

Let $C$ be a linear code over $\F_q$, where $q=p^{\ell}$ for a prime $p$, $\ell\ge 1$, with a permutation automorphism $\sigma\in \textrm{Sym}(n)$ of order $m=p^a m'$, $a\ge 1$. We again consider $\sigma$ as a product of $s$ disjoint cycles as in \eqref{sigma}.

Let us see what happens to the subcodes $\Fsig$ and $\Esig$ in this situation.

\begin{theorem}\label{thm:FsigQS1}
  Let $l_i\equiv 0\pmod{p}$ for all $i=1,\ldots,s$. Then $\Fsig$ is a self-orthogonal code and it is a subcode of $\Esig$.
  \end{theorem}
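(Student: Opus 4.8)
The plan is to work coordinatewise on the cycles, exactly as in the proof of Theorem \ref{codecomp}, and to exploit that $l_i \equiv 0 \pmod p$ forces $l_i = 0$ in $\F_q$. First I would take an arbitrary $v \in F_\sigma(C)$ and note that, since $v\sigma = v$, the restriction $v|_{\Omega_i}$ is constant, say $v|_{\Omega_i} = \alpha_i \mathbf{1}$ with $\mathbf{1}$ of length $l_i$. Then for each cycle $\Omega_i$ we have $\sum_{j \in \Omega_i} v_j = l_i \alpha_i = 0$ in $\F_q$ because $p \mid l_i$. By the very definition of $E_\sigma(C)$ this says precisely that $v \in E_\sigma(C)$; hence $F_\sigma(C) \subseteq E_\sigma(C)$.

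Next I would show $F_\sigma(C)$ is self-orthogonal. Take $v, w \in F_\sigma(C)$ with $v|_{\Omega_i} = \alpha_i \mathbf{1}$ and $w|_{\Omega_i} = \beta_i \mathbf{1}$. Writing $\beta_i'$ for $\beta_i$ in the Euclidean case and $\overline{\beta}_i$ in the Hermitian case, the inner product decomposes over the cycles as
$$(v,w) = \sum_{i=1}^s \Bigl(\alpha_i \beta_i' \sum_{j \in \Omega_i} 1\Bigr) = \sum_{i=1}^s l_i \alpha_i \beta_i' = 0,$$
again since each $l_i = 0$ in $\F_q$. Thus $(v,w) = 0$ for all $v,w \in F_\sigma(C)$, i.e. $F_\sigma(C) \subseteq F_\sigma(C)^\perp$, so $F_\sigma(C)$ is self-orthogonal. (Note that once $F_\sigma(C) \subseteq E_\sigma(C)$ is established, self-orthogonality of $F_\sigma(C)$ also follows from the orthogonality $F_\sigma(C) \perp E_\sigma(C)$ — but that orthogonality was proved in Theorem \ref{codecomp} under the coprimality hypothesis, so I would prefer to give the direct computation above, which uses only $p \mid l_i$.)

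There is no serious obstacle here: the statement is essentially a one-line consequence of $l_i \equiv 0 \pmod p$ applied to the same cycle-by-cycle bookkeeping used in Theorem \ref{codecomp}. The only point requiring a little care is that in the non-coprime setting one can no longer invoke the decomposition $C = F_\sigma(C) \oplus E_\sigma(C)$ (the element $\tfrac{1}{m}$ does not exist), so the argument must stay entirely within $F_\sigma(C)$ and must not quote the orthogonality conclusion of Theorem \ref{codecomp}; giving the inner-product computation explicitly sidesteps this. One should also record the edge case $v = \mathbf{0}$, which is trivial, and observe that the two conclusions are consistent: $F_\sigma(C) \subseteq E_\sigma(C)$ together with self-orthogonality of $F_\sigma(C)$ is exactly what the theorem asserts.
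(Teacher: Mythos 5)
Your argument is correct. Note that the paper actually states Theorem \ref{thm:FsigQS1} without any proof, so there is nothing to compare against; your cycle-by-cycle computation (constancy of $v|_{\Omega_i}$ gives $\sum_{j\in\Omega_i}v_j=l_i\alpha_i=0$ since $p\mid l_i$, hence $\Fsig\subseteq\Esig$, and likewise $(v,w)=\sum_i l_i\alpha_i\beta_i'=0$) is clearly the intended one. Your caution about not quoting Theorem \ref{codecomp} is reasonable but slightly over-careful: the orthogonality step $(v,w)=\sum_{i=1}^s\bigl(w_i'\sum_{j\in\Omega_i}v_j\bigr)=0$ in that proof nowhere uses $\gcd(m,q)=1$, so once $\Fsig\subseteq\Esig$ is established you could legitimately deduce self-orthogonality from it; still, the direct computation you give is self-contained and preferable.
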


  This theorem shows that we cannot use the same decomposition as in Section \ref{sect:perm-coprime} in order to study self-orthogonal, self-dual and/or LCD codes having an automorphism $\sigma$. We need something different here.

  Next, we prove a theorem that holds for the quasi-cyclic codes of length $ms$ and index $s$ for all integers $m=p^a m'$. A similar theorem for binary self-dual codes of length $2k$ and index $k$ is proved in \cite{DCC97-order2}.

  \begin{theorem}\label{thm:psi}
  Let $C$ be a $q$-ary quasi-cyclic code of length $sm$ and index $s$. Let $\psi:C\to\F_q^s$ be the map defined by
$$\psi(c)=(\sum_{i\in\Omega_1} c_i,\ldots,\sum_{i\in\Omega_s} c_i),$$
where $c_i$ are the coordinates of the vector $c\in\F_q^n$, $i=1,\ldots,n$. If $C$ is a self-orthogonal code then $C_{\psi}=\psi(C)$ is also self-orthogonal, and $C_{\pi}=\pi(\Fsig)\subset C_{\psi}^{\perp}$. If $C$ is self-dual then $C_{\pi}=C_{\psi}^{\perp}$.
  \end{theorem}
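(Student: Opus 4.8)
The plan is to treat the three assertions in turn, using throughout that quasi-cyclicity of $C$ means exactly that $\sigma\in\Aut(C)$ (so $v\sigma^k\in C$ for every $v\in C$ and every $k$), and that $\psi$ is linear with $\ker\psi=\Esig$, whence $\dim\psi(C)=\dim C-\dim\Esig$. I will carry out the computation for the Euclidean product; the Hermitian case is the same after inserting the conjugation in the second argument.

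The first step is the identity
$$\psi(u)\cdot\psi(v)=\sum_{k=0}^{m-1}u\cdot(v\sigma^k)\qquad(u,v\in C).$$
This is a piece of coordinate bookkeeping: inside a cycle $\Omega_j$ of length $m$, as $k$ runs over $0,\ldots,m-1$ the entry $(v\sigma^k)_i$ runs exactly once through all coordinates of $v|_{\Omega_j}$, so $\sum_{k=0}^{m-1}(v\sigma^k)_i=\sum_{i'\in\Omega_j}v_{i'}$ for every $i\in\Omega_j$. Summing $u_i(v\sigma^k)_i$ over $i$ and $k$ therefore collapses to $\sum_{j=1}^s\bigl(\sum_{i\in\Omega_j}u_i\bigr)\bigl(\sum_{i'\in\Omega_j}v_{i'}\bigr)=\psi(u)\cdot\psi(v)$. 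If $C$ is self-orthogonal, every summand on the right is $0$, hence $\psi(u)\cdot\psi(v)=0$ for all $u,v\in C$, i.e. $C_\psi$ is self-orthogonal. Note that this uses nothing about $\gcd(m,q)$.

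For the inclusion $C_\pi\subseteq C_\psi^\perp$, I would take $v\in\Fsig$, write $v|_{\Omega_j}=v_j\mathbf{1}$ so that $\pi(v)=(v_1,\ldots,v_s)$, and for any $u\in C$ compute $\pi(v)\cdot\psi(u)=\sum_{j=1}^s v_j\sum_{i\in\Omega_j}u_i=\sum_{i=1}^n v_iu_i=v\cdot u=0$, the last equality because $v,u\in C$ and $C$ is self-orthogonal. Thus $\pi(v)\perp\psi(u)$ for all $u\in C$, which is exactly $C_\pi\subseteq C_\psi^\perp$.

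Finally, in the self-dual case I would promote this inclusion to an equality by a dimension count; this is the delicate point, because when $p\mid m$ one no longer has $C=\Fsig\oplus\Esig$, so an additive argument is unavailable. Instead, put $V=\langle\mathbf{1}_{\Omega_1},\ldots,\mathbf{1}_{\Omega_s}\rangle$ (dimension $s$, the space of vectors constant on each block); then $\Fsig=C\cap V$ and $\Esig=C\cap V^\perp$. When $C=C^\perp$ this yields $\Esig=C^\perp\cap V^\perp=(C+V)^\perp$, hence $\dim\Esig=sm-\dim C-s+\dim\Fsig$. Combining with $\dim C_\psi=\dim C-\dim\Esig$ and $\dim C=sm/2$ gives $\dim C_\psi=s-\dim\Fsig$, so $\dim C_\psi^\perp=\dim\Fsig=\dim C_\pi$, the last equality because $\pi$ is injective on $\Fsig$. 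Together with $C_\pi\subseteq C_\psi^\perp$ this forces $C_\pi=C_\psi^\perp$. I expect the orbit-summation identity of the first step and the $(C+V)^\perp$ identification of the last step to be the two places that need care; the rest is routine.
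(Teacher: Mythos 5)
Your proof is correct. The first two assertions (self-orthogonality of $C_\psi$ and the inclusion $C_\pi\subseteq C_\psi^{\perp}$) are established exactly as in the paper, via the orbit-summation identity $(\psi(u),\psi(v))=\sum_{k=0}^{m-1}(u,v\sigma^k)$ and the companion identity $(\psi(u),\pi(v))=(u,v)$ for $v\in\Fsig$. Where you diverge is the self-dual case: the paper upgrades the inclusion to an equality by a direct lifting argument --- given $v\in C_\psi^{\perp}$, it forms the block-constant vector $w=\pi^{-1}(v)\in\F_q^{sm}$, observes $(u,w)=(\psi(u),v)=0$ for all $u\in C$, concludes $w\in C^{\perp}=C$, hence $w\in\Fsig$ and $v\in C_\pi$. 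You instead run a dimension count, identifying $\Fsig=C\cap V$ and $\Esig=C\cap V^{\perp}=(C+V)^{\perp}$ for $V=\langle\mathbf{1}_{\Omega_1},\ldots,\mathbf{1}_{\Omega_s}\rangle$, which yields $\dim C_\psi=s-\dim\Fsig$ and hence $\dim C_\psi^{\perp}=\dim C_\pi$. Both routes are valid (your $\perp$-identities hold for the Hermitian form as well, since it is nondegenerate); the paper's argument is shorter and avoids any dimension bookkeeping, while yours produces the explicit formula $\dim C_\psi=s-\dim\Fsig$ for self-dual $C$ as a useful byproduct. You are right to flag the two steps you name as the ones needing care, and both are carried out correctly.
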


  \begin{proof}
  Let $c=(c_{11},\ldots,c_{1m},c_{21},\ldots,c_{2m},\ldots,c_{s1},\ldots,c_{sm})\in C$. Then $$c\sigma=(c_{1m},c_{11},\ldots,c_{1,m-1},c_{2m},c_{21},\ldots,c_{2,m-1},\ldots,c_{sm},c_{s1},\ldots,C_{s,m-1})\in C.$$

  For two codewords $u,v\in C$ we have
  $$(\psi(u),\psi(v))=\sum_{i=1}^s (u_{i1}+\cdots+u_{im})(v_{i1}'+\cdots+v_{im}')= (u,v)+(u,v\sigma)+\cdots+(u,v\sigma^{m-1}).$$
If $C$ is a self-orthogonal code, then $(u,v)=(u,v\sigma)=\cdots=(u,v\sigma^{m-1})=0 \forall u,v\in C$. Hence $(\psi(u),\psi(v))=0$ for all $u,v\in C$.

Furthermore, $(\psi(u),\pi(v))=\sum_{i=1}^s (u_{i1}+\cdots+u_{im})v_{i}'= (u,v)$. Therefore, for a self-orthogonal code $C$, $(\psi(u),\pi(v))=0$ for $u\in C$, $v\in\Fsig$ and so $C_{\pi}\subset C_{\psi}^{\perp}$.

Now let $C$ be a self-dual code, $v=(v_1,\ldots,v_s)\in C_{\psi}^{\perp}$ and $$w=\pi^{-1}(v)=(\underbrace{v_1,\ldots,v_1}_m,\ldots,\underbrace{v_s,\ldots,v_s}_m).$$ It follows that
$$(u,w)=\sum_{i=1}^s (u_{i1}+\cdots+u_{im})v_{i}'=(\psi(u),v)=0 \; \forall u\in C.$$
Hence, $w\in C^\perp=C$ and so $w\in\Fsig$. This proves that $v\in C_{\pi}$ and therefore $C_{\pi}=C_{\psi}^\perp$.
  \end{proof}

  \begin{remark}\rm
  Theorem \ref{thm:FsigQS1} holds for all $m\ge 2$ and for all prime powers $q$ (if corresponding quasi-cyclic codes exist). If $\gcd(m,q)=1$ then the codes $C_{\psi}$ and $C_{\pi}$ coincide.
  \end{remark}

Quasi-cyclic codes of length $ms$ and index $s$ in the case $m=p^am'$, $a\ge 1$, $p\nmid m'$, are extensively studied in \cite{LSoleIV}. The factorization of the polynomial $x^m-1$ over $\F_q$ plays a key role in this case, too. Since $\gcd(m',q)=1$, the polynomial $x^{m'}-1$ can be factorized as in \eqref{xm-1}, namely
$$x^{m'}-1=\delta g_0(x)g_1(x)\cdots g_r(x)h_1(x)h_1^*(x)\cdots h_t(x)h_t^*(x).$$

Since $x^m-1=x^{p^am'}-1=(x^{m'}-1)^{p^a}$, we have
$$x^m-1=\delta^{p^a} g_0^{p^a}g_1^{p^a}\cdots g_r^{p^a}h_1^{p^a}(h_1^*)^{p^a}\cdots h_t^{p^a}(h_t^*)^{p^a},$$
where $\delta\in\F_q^*$, $g_0=x-1$, $g_1,\ldots,g_r$ are associated with their reciprocal polynomials, and $h_i^*(x)$ is the reciprocal polynomial of $h(x)$, $i=1,\ldots,t$. Consequently, we may now write
$$\mathcal{R}_m=(\bigoplus_{i=0}^r \F_q[x]/(g_i^{p^a}))\oplus (\bigoplus_{i=1}^t (\F_q[x]/(h_i^{p^a})\oplus \F_q[x]/((h^*_i)^{p^a})).$$

If we denote these factor rings by $R_i$ for $i=0,1,\ldots,r$,  $R_j'$ and $R_j''$ for $j=1,\ldots,t$, respectively, then
$$\mathcal{R}_m^s=(\bigoplus_{i=0}^r R_i^s)\oplus (\bigoplus_{i=1}^t (R_j'^s\oplus R_j''^s)).$$

In particular, $\mathcal{R}_m$-linear code $\mathcal{A}$ of length $s$ can be decomposed in a direct sum in the following way
$$\mathcal{A}=(\bigoplus_{i=0}^r \mathcal{A}_i)\oplus (\bigoplus_{i=1}^t (\mathcal{A}'_j\oplus \mathcal{A}''_j)),$$
where $\mathcal{A}_i$, $\mathcal{A}_j'$ and $\mathcal{A}_j''$ are linear codes over the rings $R_i$, $R_j'$ and $R_j''$, respectively, $i=0,1,\ldots,r$, $j=1,\ldots,t$. The rings $R_i$, $R_j'$ and $R_j''$ are finite chain rings. This can be described in the following way (see \cite{LSoleIV}): If $f$ is a monic irreducible factor of $x^{m'}-1$ of degree $d$ then the factor ring $\F_q[x]/(f^{p^a})$ can be identified with the finite chain ring $\F_{q^k}+u\F_{q^k}+\cdots+u^{p^a-1}\F_{q^k}$, where $u^{p^a}=0$. The detailed description of quasi-cyclic codes in this case, as well as some interesting examples, are presented in \cite{LSoleIV}, so we will not consider this theory in more detail, but for completeness we present \cite[Theorem 4.2]{LSoleIV}.

\begin{theorem}\label{QS_autp}
A linear code $C$ over $\mathcal{R}_m=\F_q[x]/(x^m-1)$ of length $s$ is self-dual
with respect to the Hermitian inner product (or equivalently, an $s$-quasi-cyclic
code of length $sm$ over $\F_q$ is self-dual with respect to the Euclidean
inner product) if and only if
$$C=(\bigoplus_{i=0}^r C_i)\oplus (\bigoplus_{i=1}^t (C'_j\oplus (C'_j)^\perp)),$$
where, for $0\le i\le r$, $C_i$ is a self-dual code over $R_i$ of length $s$ (with respect
to the Hermitian inner product) and, for $1\le j\le t$, $C_j'$ is a linear code
of length $s$ over $R_j'$ and $(C'_j)^\perp$ is its dual with respect to the Euclidean
inner product.
\end{theorem}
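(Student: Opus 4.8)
The plan is to run the argument of \cite[Theorem~4.2]{LSole} (the case $\gcd(m,q)=1$ recorded above) with the field constituents replaced by the finite chain rings $R_i,R'_j,R''_j$, exploiting that each of these is a Frobenius ring. I would argue on the ring side; the stated equivalence with Euclidean self-duality of the associated $s$-quasi-cyclic code of length $sm$ is the usual dictionary furnished by the map $\phi$. The input is the CRT decomposition $\mathcal{R}_m=(\bigoplus_{i=0}^r R_i)\oplus(\bigoplus_{j=1}^t(R'_j\oplus R''_j))$ recorded above, with its pairwise orthogonal central idempotents $e_i,e'_j,e''_j$ (which sum to $1$), together with the induced splitting $C=(\bigoplus_{i=0}^r C_i)\oplus(\bigoplus_{j=1}^t(C'_j\oplus C''_j))$ of an arbitrary $\mathcal{R}_m$-linear code of length $s$, where $C_i\le R_i^s$, $C'_j\le (R'_j)^s$, $C''_j\le (R''_j)^s$.

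First I would analyse the conjugation $c(x)\mapsto\overline{c(x)}=c(x^{m-1})$, a ring automorphism of $\mathcal{R}_m$ of order dividing $2$ (recall $x$ is a unit of $\mathcal{R}_m$). Since $g_i$ is associated to its reciprocal, $g_i^{p^a}$ is self-reciprocal, so conjugation preserves $R_i=\F_q[x]/(g_i^{p^a})$ and restricts there to a ring involution $\tau_i$; since $h_j^*$ is the reciprocal of $h_j$, conjugation interchanges $R'_j$ and $R''_j$, giving a ring isomorphism $\tau_j\colon R'_j\to R''_j$. Equivalently, conjugation fixes every $e_i$ and swaps $e'_j$ with $e''_j$. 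Hence the Hermitian form is block diagonal along the CRT decomposition: for $u,v\in\mathcal{R}_m^s$ the cross terms between distinct blocks in $(u,v)=\sum_{\ell=1}^s u_\ell\overline{v_\ell}$ are killed by orthogonality of the idempotents, the $R_i$-component of $(u,v)$ equals $\sum_\ell (e_iu_\ell)\,\tau_i(e_iv_\ell)$, the $R'_j$-component equals $\sum_\ell (e'_ju_\ell)\,\tau_j^{-1}(e''_jv_\ell)$, and the $R''_j$-component equals $\sum_\ell (e''_ju_\ell)\,\tau_j(e'_jv_\ell)$. Thus $(u,v)=0$ in $\mathcal{R}_m$ if and only if each of these block components vanishes.

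This decouples $C=C^{\perp}$ into independent conditions on the blocks. On a self-reciprocal block, the $R_i$-component is precisely the Hermitian form over $R_i$ attached to $\tau_i$, so $C=C^{\perp}$ holds if and only if each $C_i$ is Hermitian self-dual over $R_i$ and, for each $j$, $C''_j$ is the annihilator of $C'_j$ and $C'_j$ the annihilator of $C''_j$ under the pairing between $(R'_j)^s$ and $(R''_j)^s$; identifying $R''_j$ with $R'_j$ via $\tau_j$ turns that pairing into the Euclidean inner product, so the last conditions read $C''_j\subseteq(C'_j)^{\perp}$ and $C'_j\subseteq(C''_j)^{\perp}$. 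Since each of $R'_j,R''_j$ is a finite chain ring, hence Frobenius, the pairing is perfect: $|M|\,|M^{\perp}|=|R'_j|^{s}=|R''_j|^{s}$ and $(M^{\perp})^{\perp}=M$ for every submodule $M$. Therefore the two inclusions are equivalent, and for a self-dual $C$ they become the equality $C''_j=(C'_j)^{\perp}$. Reassembling the blocks yields the ``only if'' direction. For ``if'', the same block computation shows that if the $C_i$ are Hermitian self-dual and $C''_j=(C'_j)^{\perp}$ then $C\subseteq C^{\perp}$; and since $|C_i|^2=|R_i|^{s}$ for every self-dual code over the Frobenius ring $R_i$ (because $|C_i|\,|C_i^{\perp}|=|R_i|^{s}$ and $C_i=C_i^{\perp}$) and $|C'_j|\,|(C'_j)^{\perp}|=|R'_j|^{s}=|R''_j|^{s}$, we obtain $|C|=|\mathcal{R}_m|^{s/2}$, which forces $C=C^{\perp}$.

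The blockwise bookkeeping — which CRT constituents conjugation preserves and which it interchanges — is routine. The one substantive ingredient, needed both to upgrade the inclusions $C''_j\subseteq(C'_j)^{\perp}$ to equalities and to run the size count in the ``if'' direction, is that each $R_i,R'_j,R''_j$, being a finite chain ring, is a Frobenius ring, so that $|M|\,|M^{\perp}|=|R|^{s}$ and $(M^{\perp})^{\perp}=M$ for all submodules $M\le R^{s}$. This is exactly the structural input imported from \cite{LSoleIV}; granting it, the theorem follows from the CRT-blockwise reduction above.
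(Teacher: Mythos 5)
Your argument is correct, but note that the paper itself offers no proof of this statement: it is quoted verbatim as \cite[Theorem 4.2]{LSoleIV} (Ling--Niederreiter--Sol\'e), so there is no internal proof to compare against. Your reconstruction --- CRT splitting of $\mathcal{R}_m$ into the chain rings $R_i$, $R'_j$, $R''_j$, the observation that conjugation $c(x)\mapsto c(x^{-1})$ fixes the self-reciprocal blocks and swaps the paired blocks (hence the Hermitian form is block-diagonal and restricts to a Hermitian form on each $R_i$ and to a Euclidean pairing between $(R'_j)^s$ and $(R''_j)^s$), followed by the Frobenius-ring duality $|M|\,|M^{\perp}|=|R|^{s}$ and $M^{\perp\perp}=M$ to upgrade the orthogonality inclusions to equalities and to close the cardinality count in the converse --- is exactly the standard argument in the cited source, and all the steps check out: the identification of the $R'_j$-component of $(u,v)$ as $\sum_\ell (e'_ju_\ell)\tau_j^{-1}(e''_jv_\ell)$ is right, and the Frobenius property of finite chain rings is the correct structural input. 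In short, the proof is sound and supplies a self-contained justification for a result the paper only imports.
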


  The binary self-dual codes invariant under a permutation $\sigma$ of order 2 with $c$ independent cycles of length 2 and $f>0$ fixed points are studied in \cite{Stef2000}. A construction method for such codes is proposed which is used to obtain optimal self-dual codes of different lengths. In \cite{Borello_Nebe}, the authors prove that the natural projection of the fixed code of an involution of a self-dual binary linear code is self-dual under some (quite strong) conditions on the codes. To prove that, they introduce the family of binary semi self-dual codes.

\section{Examples}
\label{sect:exam}

In this section, we give three examples of codes constructed using the presented decomposition. We consider an almost quasi-cyclic binary self-dual code with an automorphism of order 15, a quasi-cyclic binary self-dual code with an automorphism of order 10, and an almost quasi-cyclic ternary LCD code with an automorphism of order 5.

\begin{example}\rm
In the first example we present a binary self-dual almost quasi-cyclic code $C$ with two 15-cycles and two fixed points. Now $C_{\pi}$ must be a binary self-dual $[4,2,2]$ code and we take $C_\pi=\{0000,1010,0101,1111\}$. The code $\Esig^*$ is a binary quasi-cyclic code of length 30 and index 2, which is self-orthogonal code of dimension 14. Since
$$x^{15}-1=(x-1)(x^2+x+1)(x^4+x^3+x^2+x+1)(x^4+x+1)(x^4+x^3+1),$$
where $g_1(x)=x^2+x+1$ and $g_2(x)=x^4+x^3+x^2+x+1$ are self-reciprocal polynomials, and $h(x)=x^4+x+1$ and $h^*(x)=x^4+x^3+1$ are mutually reciprocal polynomials over $\F_2$. It follows that
$$\mathcal{R}_{15}=\F_2[x]/(g_0)\oplus \F_2[x]/(g_1)\oplus \F_2[x]/(g_2)\oplus \F_2[x]/(h)\oplus \F_2[x]/(h^*).$$
Instead of the factor-rings in the above formula, we use the corresponding ideals of ${\mathcal R}_m$:
$$G_1=\langle\frac{x^{15}-1}{g_1(x)}\rangle\cong\F_4, \; G_2=\langle\frac{x^{15}-1}{g_2(x)}\rangle\cong\F_{16},$$ $$H'=\langle\frac{x^{15}-1}{h(x)}\rangle\cong\F_{16}, \; H''=\langle\frac{x^{15}-1}{h^*(x)}\rangle\cong\F_{16}.$$
The generating idempotents of these fields are $e_1(x)=x^{14}+x^{13}+x^{11}+x^{10}+x^8+x^7+x^5+x^4+x^2+x\in G_1$,
$e_2(x)=x^{14}+x^{13}+x^{12}+x^{11}+x^{9}+x^8+x^7+x^6+x^4+x^3+x^2+x\in G_2$, $e'(x)=x^{12}+x^9+x^8+x^6+x^4+x^3+x^2+x\in H'$ and $e''(x)=e'(x^{-1})=x^{14}+x^{13}+x^{12}+x^{11}+x^9+x^7+x^6+x^3\in H''$.
For the code $C_{\phi}$ we have
$$C_{\phi}=C_1\oplus C_2\oplus C'\oplus C'',$$
where $C_i$ is a Hermitian self-dual code over $G_i$, $i=1,2$, $C'$ and $C''$ are mutually orthogonal linear codes over $H'$ and $H''$, respectively, with respect to the Euclidean inner product. We take $C_i=\langle (e_i(x),e_i(x))\rangle$, $i=1,2$, $C'=(H')^2$, and so $C''$ is the zero code. The constructed binary code is doubly-even self-dual $[32,16,8]$ code with a generator matrix $gen_1$, and its weight enumerator is $1+620y^{8}+13888y^{12}+36518y^{16}+13888y^{20}+620y^{24}+y^{32}$.
\[gen_1=\footnotesize{\begin{pmatrix}
01101101101101101101101101101100\\
10110110110110110110110110110100\\
01111011110111101111011110111100\\
10111101111011110111101111011100\\
11011110111101111011110111101100\\
11101111011110111101111011110100\\
01111010110010000000000000000000\\
00111101011001000000000000000000\\
00011110101100100000000000000000\\
10001111010110000000000000000000\\
00000000000000001111010110010000\\
00000000000000000111101011001000\\
00000000000000000011110101100100\\
00000000000000010001111010110000\\
11111111111111100000000000000010\\
00000000000000011111111111111101
\end{pmatrix}}\]

\end{example}

\begin{example}\rm
We consider again a binary self-dual code $C$ but this time we use a permutation automorphism $\sigma$ with four 10-cycles. Since
$$x^{10}-1=(x-1)^2(x^4+x^3+x^2+x+1)^2$$
over $\F_2$, for the ring $\mathcal{R}_{10}$ we have
$$\mathcal{R}_{10}=\F_2[x]/((x-1)^2)\oplus \F_2[x]/((x^4+x^3+x^2+x+1)^2).$$
According to Theorem \ref{QS_autp}, $\phi(C)=C_0\oplus C_1$, where $C_i$ is a linear Hermitian self-dual code over $G_i$, $i=0,1$, $G_0=<(x^4+x^3+x^2+x+1)^2>\cong \F_2[x]/((x-1)^2)$, $G_1=<(x-1)^2>\cong \F_2[x]/((x^4+x^3+x^2+x+1)^2)$. The structure of the rings $G_0$ and $G_1$ is as follows:
$$G_0=\{0,e=1+x^2+x^4+x^6+x^8,u=1+x+\cdots+x^9,e+u\}=\F_2+u\F_2,$$
$$G_1=\F_{16}+u'\F_{16}, \; u'=1+x^4+x^5+x^9, \; \F_{16}^*=\{\beta^i: \; \beta=1+x^2, \; i=0,\ldots,14\}.$$
The identity element of $G_1$ is $e'=x^2+x^4+x^6+x^8=1+e$. The Euclidean and Hermitian inner products in $G_0$ are the same, as $e(x^{-1})=e(x)$ and $u(x^{-1})=u(x)$. Taking the codes
$$C_0=\langle\begin{pmatrix}
e&0&e&u\\
0&e&u&e
\end{pmatrix}\rangle \ \ \ \mbox{and} \ \ \
C_1=\langle\begin{pmatrix}
e'&e'&u'&0\\
0&xu'&e'&e'\\
u'&u'&0&0\\
0&0&u'&u'
\end{pmatrix}
\rangle,$$
we obtain a binary quasi-cyclic self-dual $[40,20,8]$ doubly-even code whose automorphism group has order 245760. The code $C_0$ is the only Type II code over $\F_2+u\F_2$ up to equivalence \cite{F2uF2}.
\end{example}

The third example presents a ternary LCD code.

\begin{example}\rm
If $C$ is a ternary LCD code of length 18 having an automorphism of order 5 with three 5-cycles and three fixed points, its subcodes $\Fsig$ and $\Esig$ are also LCD codes, $C_{\pi}$ is a ternary code of length 6, and $C_{\phi}$ is an LCD code of length 3 with respect to the Hermitian inner product over the field $\F_3[x]/(x^4+x^3+x^2+x+1)$ with $3^4$ elements. This field is described in detail in \cite{Huffman3}, where Huffman has used it to construct ternary self-dual codes.

We take $C_{\pi}=\langle\begin{pmatrix} 110011\\ 001110\end{pmatrix}\rangle$ and $C_{\phi}=\langle\begin{pmatrix} e&e&0\\ 0&\alpha&e\end{pmatrix}\rangle$, where $e=2+x+x^2+x^3+x^4$, $\alpha=x^3+2x^4$. The constructed $[18,10]$ code has minimum distance 4 and a generator matrix $gen_3$.
\[gen_3=\footnotesize{\begin{pmatrix}
111111111100000&011\\
000000000011111&110\\
211112111100000&000\\
121111211100000&000\\
112111121100000&000\\
111211112100000&000\\
000000001221111&000\\
000002000112111&000\\
000001200011211&000\\
000000120011121&000
\end{pmatrix}} \ \ \ \ gen_4=\footnotesize{\begin{pmatrix}
111111111100000&011\\
000000000011111&111\\
211112111100000&000\\
121111211100000&000\\
112111121100000&000\\
111211112100000&000\\
000000001221111&000\\
000002000112111&000\\
000001200011211&000\\
000000120011121&000
\end{pmatrix}}\]
\end{example}

It is worth noting that $5\not\equiv 1\pmod 3$ and the code $C_{\pi}$ is not necessarily LCD. For example, if we take $C_{\pi}=\langle\begin{pmatrix} 110011\\ 001111\end{pmatrix}\rangle$, which is not an LCD code, with the same $C_{\phi}$, we get another $[18,10,4]$ LCD ternary code with a generator matrix $gen_4$. The first code contains 30 codewords of weight 4, while the second code has 40 codewords with minimum weight. This example shows that the condition on $m$ in Corollary \ref{cor:FsigQS} is important.

\section{Conclusion}

The decomposition of the linear codes that have non-trivial permutation automorphisms gives us a powerful construction for codes with optimal parameters and important properties. Many optimal self-dual codes with automorphisms of prime order over fields with 2, 3 or 4 elements are obtained using their previously known structure \cite{Huffman48,Huffman3,Huffman4II,Yorgov56}.

As a conclusion, we would like to present what is known so far about the automorphisms of the putative binary self-dual $[72,36,16]$ code.  The extremal self-dual codes of
length a multiple of $24$ are of particular interest, but only two such codes are known so far - the extended Golay code $g_{24}$ and the extended quadratic residue code $q_{48}$ (see \cite{Pless68,HLT}). In  1973 Sloane \cite{Sloane}
posed a question which remains unresolved:
is there a binary self-dual doubly-even $[72,36,16]$ code? The automorphism group of
the extended Golay code  is the
5-transitive Mathieu group $M_{24}$ of order
$2^{10}\cdot 3^3\cdot 5\cdot 7\cdot 11\cdot 23$ (see \cite{Berlekamp}), as the automorphism group of $q_{48}$ is only 2-transitive and is isomorphic to the projective special linear group $\mathrm{PSL(2,47)}$ of order
$2^4\cdot 3\cdot 23\cdot 47$ \cite{KS}. The first authors to study the automorphism group of the putative $[72,26,16]$ code were Conway and Pless \cite{CP}, in particular they focused on the possible automorphisms of odd prime order. The last published result on this group so far is given in \cite{Borello_Magdeburg} and we present it in the following theorem.

\begin{theorem}
If $C$ be self-dual $[72, 36, 16]$ code, then $\Aut(C)$ is trivial or
isomorphic to $C_2$, $C_3$, $C_2\times C_2$ or $C_5$.
\end{theorem}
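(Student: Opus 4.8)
The plan is to follow the strategy initiated by Conway and Pless and refined over decades: rule out all possible orders of prime-order automorphisms one by one, and then bound the structure of the full group. First I would recall that by Cauchy's theorem (Theorem~\ref{thm:Sylow}) it suffices to understand the elements of prime order in $\Aut(C)$, since any nontrivial group contains such an element, and then to analyze how such elements can combine. For each prime $r$ dividing $|\Aut(C)|$, one writes a putative automorphism $\sigma$ of order $r$ as a product of $c$ cycles of length $r$ and $f$ fixed points with $cr+f=72$, and applies the decomposition $C=F_\sigma(C)\oplus E_\sigma(C)$ from Theorem~\ref{codecomp} together with the constituent analysis of Theorem~\ref{QS_autp}. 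The self-duality, double-evenness, and the extremal minimum distance $16$ impose strong arithmetic and coding-theoretic constraints on the possible parameters $(c,f)$; in most cases one derives a contradiction with the existence of an extremal $[72,36,16]$ code, either by a weight-enumerator/shadow argument or by an explicit (often computer-assisted) search showing no extremal code with that automorphism structure exists.

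The key steps, in order: (1) show that no automorphism of order $p$ for $p\in\{7,11,13,17,19,23,\dots\}$ up to $71$ can occur — many of these are eliminated quickly by the constraint $cr\le 72$ combined with the fact that $F_\sigma(C)$ would have to be a self-dual code of too-small length or $E_\sigma(C)^*$ a quasi-cyclic code whose constituents cannot support minimum distance $16$; (2) eliminate $p=2$ with too many fixed points and $p=3$ with too many fixed points, again via the projection codes $C_\pi$ and $C_\phi$ and bounds on their minimum distances coming from Corollaries~\ref{cor:AQC-SO} and related structure; (3) treat the surviving small primes $2,3,5$ and show that an automorphism of order $5$ must have the shape forcing $c=14,f=2$ (or similar), order $3$ forces a specific cycle structure, and order $2$ likewise, each of these being consistent only with the listed cyclic groups; (4) finally, rule out larger or mixed groups — e.g. show $C_2\times C_2\times C_2$, $C_4$, $C_3\times C_3$, dihedral groups, or groups of order divisible by two distinct odd primes cannot act, by combining the prime-order constraints with the fact that a subgroup of order $rs$ would induce automorphisms of orders $r$, $s$, and $rs$ simultaneously, over-constraining the constituent decomposition. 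Assembling these yields that $\Aut(C)$ is trivial, $C_2$, $C_3$, $C_2\times C_2$, or $C_5$.

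The main obstacle is step~(3)–(4): eliminating the small primes that are \emph{not} immediately killed by a crude length bound requires the full force of the constituent machinery (the finite chain ring description $\F_{q^k}+u\F_{q^k}+\cdots$ for the non-coprime case $p=2\mid 2$ and $p=3\mid 3$ when the cycle length is divisible by the characteristic, i.e. Theorem~\ref{thm:psi} and the discussion around $x^m-1=(x^{m'}-1)^{p^a}$), plus a careful analysis of which constituent codes can contribute weight $\ge 16$ to the global code, and in the hardest residual cases an exhaustive computer search over the finitely many constituent configurations. This is precisely the content of the series of papers culminating in \cite{Borello_Magdeburg}, and I would cite that work for the cases where a pure pencil-and-paper argument does not suffice; the role of the present paper's decomposition theorems is to organize and partially streamline that case analysis rather than to replace the computational input entirely.
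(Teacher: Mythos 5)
The paper offers no proof of this theorem: it is quoted as a known result from \cite{Borello_Magdeburg}, the culmination of a long series of papers going back to Conway and Pless \cite{CP}. Your proposal correctly recognizes this and ultimately defers the substantive content to that same citation, so in effect you take the same route as the paper. One caution: your sketch understates the difficulty of several of the cases — the primes in your step (1) are not all killed by the crude bound $cr\le 72$ (the orders $23$, $17$ and especially $7$, with cycle type $7$-$(10,2)$, each required a dedicated, nontrivial and in part computer-assisted argument), and excluding the composite orders $4$, $6$, $9$ and $10$ in your step (4) likewise took separate papers rather than a generic ``over-constraining'' argument. The decomposition theorems of the present paper (Theorems \ref{codecomp}, \ref{thm:psi} and \ref{QS_autp}) organize this case analysis but do not come close to replacing it, so your text should be read as a survey of the cited literature rather than as a proof that could stand in for the citation.
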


So, if such a code exists, it may be very difficult to find it by algebraic techniques. Many authors have tried combinatorial methods to construct such a code, also using its connections with combinatorial designs and Hadamard matrices \cite{Assmus,BFW}, but all the resulting codes have minimum distance $d\le 12$.
Although efforts to obtain such a code by permutation of a given order have so far been unsuccessful, the method for constructing self-dual, self-orthogonal and LCD codes has been used many times and many new codes have been introduced through it. So studying the structure of linear codes with permutation automorphisms provides a powerful method to investigate, construct and classify codes with given properties and parameters.


\end{document}